\documentclass[journal]{IEEEtran}

\usepackage{amsmath}
\interdisplaylinepenalty=2500

\usepackage{amssymb,color,graphicx}
\usepackage{subfigure}
\usepackage{cite}
\usepackage{verbatim}
\usepackage{algorithm}
\usepackage{algorithmic}
\usepackage{url}

\newtheorem{lemma}{Lemma}
\newtheorem{proposition}{Proposition}
\newtheorem{definition}{Definition}

\newcommand{\eg}{\emph{e.g., }}
\newcommand{\ie}{\emph{i.e., }}

\begin{document}

\title{Delay Sensitive Communications over Cognitive Radio Networks}

\author{Feng Wang, Jianwei Huang, \emph{IEEE Senior Member}, and Yuping Zhao
\thanks{Feng Wang is with the Beijing Space Technology Development and Test Center,
China Academy of Space Technology, Beijing 100094, China, and was
with the State Key Laboratory of Advanced Optical Communication
Systems \& Networks, Peking University, Beijing 100871, China
(email: fengwangpku@gmail.com). Yuping Zhao is with the State Key
Laboratory of Advanced Optical Communication Systems \& Networks,
Peking University, Beijing 100871, China (email:
yuping.zhao@pku.edu.cn). Jianwei Huang (corresponding author) is
with the Department of Information Engineering, the Chinese
University of Hong Kong, Shatin, Hong Kong (email:
jwhuang@ie.cuhk.edu.hk). Part of the work was done when Feng Wang
visited the Chinese University of Hong Kong between July to
December, 2009. Part of the results was presented in IEEE GLOBECOM
2010 \cite{ACGlobecom10}. The authors acknowledge the contribution
of Dr.~Junhua Zhu in \cite{ACGlobecom10}.

This work is supported by the General Research Funds (Project Number
412710 and 412511) established under the University Grant Committee
of the Hong Kong Special Administrative Region, China, and by Important National Science and Technology Specific
Projects of China (Project Number 2009ZX03003-011-01).} }

\maketitle


\begin{abstract}
Supporting the quality of service of unlicensed users in cognitive radio
networks is very challenging, mainly due to  dynamic  resource availability because of  the licensed users' activities. In this paper, we study the optimal admission control
and channel allocation decisions in cognitive overlay networks in order to support
delay sensitive communications of unlicensed users. We formulate it as a Markov decision
process problem, and solve it by transforming the original
formulation into a stochastic shortest path problem. We then propose
a simple heuristic control policy, which includes a threshold-based admission
control scheme and and a largest-delay-first channel allocation scheme, and prove
the optimality of the largest-delay-first channel allocation scheme. We
further propose an improved policy using the rollout algorithm. By
comparing the performance of both proposed policies with the
upper-bound of the maximum revenue,  we show that our policies
achieve close-to-optimal performance with low complexities.
\end{abstract}

\begin{IEEEkeywords}
Admission control, Markov decision process, Bellman's equation,
rollout algorithm
\end{IEEEkeywords}


\section{Introduction}\label{Sec:Introduction}

Cognitive radio technology has the potential to significantly
improve spectrum utilization and accommodate many more devices in
the limited spectrum. Supporting Quality of Service (QoS), however,
is challenging in cognitive radio networks due to the dynamically
changing network resources. In this paper, we will
design an admission control and channel allocation mechanism to
support delay-sensitive real-time secondary unlicensed communications. Compared
with the resource allocation in conventional communication networks,
the unique challenge here is to incorporate the impact of primary
licensed users on the availability of the communication resources.

Optimal channel selection of a \emph{single} secondary unlicensed
user has been well studied in the literature (\eg
\cite{ZhaoKrLi2008,LiuKrLi2010}). Zhao~\emph{et
al.}~\cite{ZhaoKrLi2008}  considered the  total expected reward
maximization problem when the secondary user can only sense one
channel at a time. Liu \emph{et al.}~\cite{LiuKrLi2010} further
considered the case where the secondary user can sense multiple
channels simultaneously. The resource allocation problem becomes
more complicated when there are multiple secondary users (\eg
\cite{ZhouLiLiWaSo2010,UrN09}). Zhou \emph{et
al.}~\cite{ZhouLiLiWaSo2010} jointly considered channel allocation
with power control. Urgaonkar and Neely \cite{UrN09} developed
opportunistic scheduling policies to provide performance guarantees.

Admission control is critical for supporting QoS when there are too
many users that want to access the network simultaneously. In
traditional cellular networks, many results have shown that the
optimal admission control policy has a threshold structure
(\eg\cite{ChauWoLi2006,HoYP02,KimGrGo2010}). In cognitive radio
networks, researchers have studied admission control for both
underlay networks (\emph{e.g.},
\cite{KangLiGa2010,XiZSH09,LeHo2008}) and overlay networks
(\eg~\cite{KiS09,MuAD09}). In cognitive overlay networks, admission
control is often jointly pursued with channel allocation, as the
secondary users can only access idle channels not occupied by
primary users. Admission control also can be jointly considered with
other mechanisms, \emph{e.g.}, Kim and Shin~\cite{KiS09} considered
joint optimal admission and eviction control using semi-Markov
decision process and linear programming. Mutlu \emph{et al.}
\cite{MuAD09} investigated the problem of optimal spot pricing of
spectrum for maximizing the profit from the admission of secondary
users.

In this paper, we consider the joint admission control and channel
allocation problem for cognitive overlay networks. {Our problem is
very different from the throughput maximization for elastic data
traffic studied in most previous
literature~\cite{ZhouLiLiWaSo2010,UrN09}.} We want to support the
secondary users' real-time applications (\eg VoIP and video streaming) with {stringent delay
constraints}.

The rest of the paper
is organized as follows. We describe the system model in Section
\ref{Sec:SystemModel}, and formulate the admission control and
channel allocation problem as a Markov Decision Process (MDP) in
Section \ref{Sect:ProblemFormualtion}. In Section
\ref{Sec:Analysis}, we transform the problem into a stochastic
shortest path problem and prove the convergence of the Bellman's
equation. Section \ref{Sec:DynamicProgramming} proposes a heuristic
control policy and an improved rollout policy, together with the
corresponding theoretical analysis and simulation results. We
finally conclude in Section \ref{Sec:Conclusion}.


\section{System Model} \label{Sec:SystemModel}

{This paper studies a cognitive radio network as shown in Fig.
\ref{fig:SystemModel}. We consider an infrastructure-based secondary
unlicensed network, where a secondary network operator senses the
channel availabilities (\ie primary licensed users' activities) and
decides the admission control and channel allocation for the
secondary users. A similar network architecture has been  considered
in several recent literature (\eg
\cite{Lehr07,Peha09,DuHS10-1,DuHS10-2}).
Comparing with the distributed network architecture where end users
need to perform spectrum sensing individually, the network
architecture considered in this paper has the advantage of reducing
the complexity of the secondary user devices and providing better
QoS support. Such infrastructure-based network without user sensing
requirement is also consistent with the recent ruling of FCC (Federal
Communications Commission) on the TV white space sharing \cite{fcc}.}

\begin{figure}
\centering
  \includegraphics[width=1.0\columnwidth]{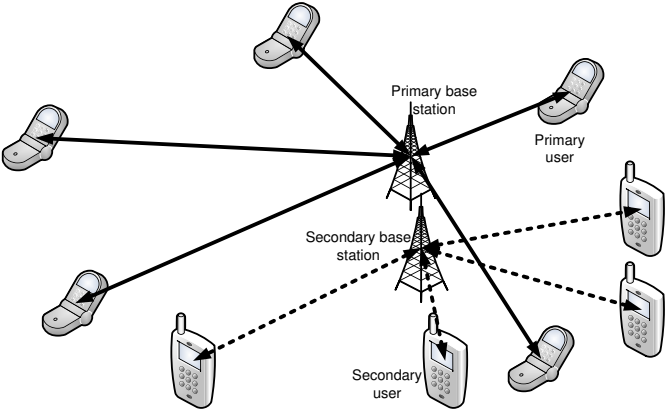}\\
  \caption{A cognitive radio network scenario. In the secondary network,
  the dotted arrows denote the channels between the secondary
  base station and the secondary users.}\label{fig:SystemModel}
\end{figure}

{One way to realize network-based spectrum sensing is to construct a
sensor network that is dedicated to sensing the radio environment in
space and time~\cite{Weiss2010.sensing}. The secondary base station
will collect the sensing information from the sensor network and
provide it to the unlicensed users, which is called ``sensing as
service''. There has been significant current research efforts along
this direction in the context of an European project
SENDORA~\cite{SENDORA}, which aims at developing techniques based on
sensor networks for supporting coexistence of licensed and
unlicensed wireless users in a same area.}

{In our model, the time is divided into equal length slots. Primary
users' activities remain roughly unchanged within a single time
slot. This means that it is enough for the operator to sense once at
the beginning of each time slot (see Fig. \ref{fig:timeslot}). For
readers who are interested in the optimization of the time slot
length to balance sensing and data transmission, see
\cite{Senhua2009.sensingtransmission}.}

\begin{figure}
\centering
\includegraphics[width=1.0\columnwidth]{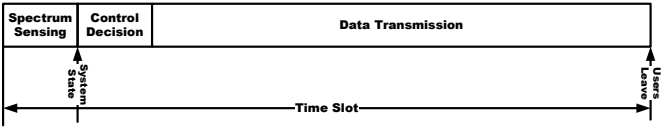}\\
\caption{The components of a time slot.}\label{fig:timeslot}
\end{figure}

\begin{figure}
\centering
\includegraphics[width=1.0\columnwidth]{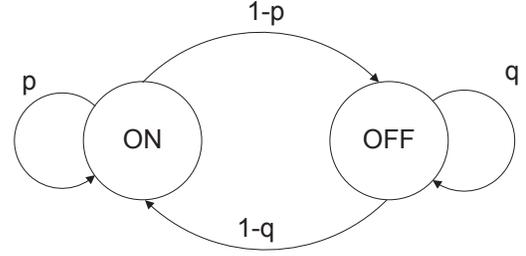}\\
\caption{Markovian ON/OFF model of channel
activities.}\label{fig:OnOff}
\end{figure}

The network has a set $\mathcal{J}=\{1,\ldots,J\}$ of orthogonal
primary licensed channels. The state of each channel follows a
Markovian ON/OFF process as in Fig.~\ref{fig:OnOff}. {If a channel
is ``ON'', then it means that the primary user is not active on the
channel and the channel condition is good enough to support the
transmission rate requirement of a secondary user. Here we assume
that all secondary users want to achieve the same target
transmission rate (\eg that of a same type of video streaming
application). If a channel is ``OFF'', then either a  primary user
is active on this channel, or the channel condition is not good
enough to achieve the secondary user's target rate. In the time
slotted system,} the channel state changes from ``ON'' to ``OFF''
(``OFF'' to ``ON'', respectively) between adjacent time slots with a
probability $p$ ($q$, respectively). When a channel is ``ON'', it
can be used by a secondary unlicensed user.

We consider an infinitely backlog case, where there are many
secondary users who want to access the idle channels. Each idle
channel can be used by at most one secondary user at any given time.
A secondary user represents an unlicensed user communicating with
the secondary base station as shown in Fig. \ref{fig:SystemModel}.
The secondary users are interested in real-time applications such as
video streaming and VoIP, which require steady data rates with
{stringent delay constraints. The key QoS parameter is the
\emph{accumulative delay}, which is the total delay that a secondary
user experiences after it is admitted into the system.} Once a
secondary user is admitted into the network, it may finish the
session \emph{normally} with a certain probability. However, if the
user experiences an accumulative delay larger than a threshold, then
its QoS significantly drops (\eg freezing happens for video
streaming) and the user will be \emph{forced to terminate}.

To make the analysis {tractable}, we make several assumptions.
First, we assume that the availabilities of all channels follow the
same Markovian model. This is reasonable if the traffic types of
different primary users are similar (\eg all primary users are voice
users).  Second, we assume that all secondary users experience the
same channel availability independent of their locations. This is
reasonable when the secondary users are close-by. Third, we assume
the spectrum sensing is error-free. This can be well approximated by
having enough sensors performing collaborating sensing. Furthermore,
we assume that all channels are homogeneous and can provide the same
data rate to any single secondary user using any channel. Finally,
we assume that all secondary users are homogeneous (\ie interested
in the same application such as video streaming). Each secondary
user only requires one available channel to satisfy its rate
requirement. Several of the above assumptions can be relaxed by
increasing the state space of the MDP formulation. As we will see
shortly, the admission control and channel allocation issue in this
homogeneous case is already quite complicated and admits no
closed-form solutions. The analysis and insights of this paper will
enable us to further consider heterogeneous channels and secondary
users in the future.


\section{Problem Formulation}\label{Sect:ProblemFormualtion}

We formulate the admission control and channel
allocation problem {as an MDP}~\cite{Ber05}.
In an infinite-horizon MDP with a set of
finite states $\mathcal{S}$, the state evolves through time
according to a transition probability matrix
{$\left\{P_{x_kx_{k+1}}\right\}$}, which depends on both
the current state and the control decision from a set $\mathcal{U}$.
More specifically, if the network is in state {$x_k$} in time
slot $k$ and selects a decision {$u(x_k) \in \mathcal{U}(x_k)$},
then the network obtains a revenue $g(x_k,u(x_k))$ in time slot $k$
and moves to state {$x_{k+1}$} in time slot $k+1$ with
probability {$P_{x_kx_{k+1}}(u(x_k))$}. We want to maximize
the long-term time average revenue, \ie
\begin{equation}
\label{equ:RevenueFunction}
    \lim_{T \to
    \infty}E\left\{{\frac{1}{T}\sum_{k=0}^{T-1}{g(x_k,u(x_k))}}\right\}.
\end{equation}


\subsection{The State Space}\label{Sec:sub.StateSpace}

{The system state describes system information after the network
performs  spectrum sensing at the beginning of the time slot (see
Fig. \ref{fig:timeslot})}. It consists of two components:
\begin{itemize}
  \item A \emph{channel state component},
  $m=\boldsymbol{a}^T \cdot \boldsymbol{a}$, describes the number of available channels. Here $\boldsymbol{a}=(a_j,\forall j\in\mathcal{J})$ is the  channel availability vector,
  where $a_j=1$ (or $0$) when channel $j$ is available (or not).

\item A \emph{user state component}, $\boldsymbol{\omega_e}=(\omega_{e,i},\forall i\in\mathcal{D})$,
describes the numbers of secondary users with different accumulative
delays. Here $\mathcal{D}=\{0,1,\ldots,D_{\max}\}$ is the set of
possible delays, and $\omega_{e,i}$ denotes the number of secondary
users whose accumulative delay is $i$.
\end{itemize}

We let $\mathcal{M}$ denote the feasible set of the channel state
component, and $\Omega$ denote the feasible set of the user state
component. The state space is given by
{$\mathcal{S}=\left\{(m,\boldsymbol{\omega_e)}|m \in
\mathcal{M},\boldsymbol{\omega_e} \in \Omega\right\}.$}

State $\theta$ is said to be \emph{accessible} from state $\eta$ if
and only if  it is possible to reach state $\theta$ from $\eta$, \ie
$P\{reach~\theta|start~in~\eta\}>0$ \cite{Ross07-b}. Two states that
are accessible to each other are said to be able to
\emph{communicate} with each other. In our formulation, all the
states in space $\mathcal{S}$ are accessible from state
$\boldsymbol{0}$, {which is defined as  a state where there is no
available channel and no single admitted secondary user in the
system}. Since it is possible to have $m=0$ in several consecutive
time slots (when primary traffic is heavy and occupies all
channels), thus state $\boldsymbol{0}$ is accessible from any state
in the state space $\mathcal{S}$. Hence, all the states communicate
with each other and the Markov chain is \emph{irreducible}. Finally,
the state space is finite, so all the states are \emph{positive
recurrent} \cite{Ross07-b}. This property turns out to be critical
for the analysis in Section \ref{Sec:Analysis}.


\subsection{The Control Space}

{For the state $x_k=\{m,\boldsymbol{\omega_e}\} \in \mathcal{S}$ in
each time slot $k$,} the set of available control choices
{$\mathcal{U}(x_k)$} depends on the relationship between the channel
state and the user state. The control vector {$u(x_k)=\{u_a,
\boldsymbol{u_e}\}$} consists of two parts: scalar $u_a$ denotes the
number of admitted new secondary users, and vector
$\boldsymbol{u_e}=\{u_{e,i}, \forall i\in\mathcal{D} \}$ denotes the
numbers of secondary users who are allocated channels and have
accumulative delays of $i\in\mathcal{D}$ at the beginning of the
current time slot.
Without loss of generality, we assume $0 \leq u_a \leq J$, \ie we
will never admit more secondary users than the total number of
channels. This leads to $0 \leq u_{e, 0} \leq \omega_{e, 0} + u_a$,
$0 \leq u_{e, i} \leq \omega_{e, i}$ for all $ i\in [1,D_{\max}]$,
and $0 \leq \sum_{i=0}^{D_{\max}}{u_{e,i}} \leq m$. Since {$m \leq
J$}, the cardinality of the control space $\mathcal{U}$ is
{$J^{D_{\max}+2}$.}


\subsection{The State Transition}

Current state {$x_k=\{m, \boldsymbol{\omega_e}\} \in \mathcal{S}$}
together with the control {$u(x_k) \in \mathcal{U}(x_k)$}
determine the probability of reaching the next state
{$x_{k+1}=\{m^\prime, \boldsymbol{\omega_e^\prime}\}$}.

First, the transition of channel state component from $m$ to
$m^{\prime}$ depends on the underlying primary traffic. We can
divide $m^\prime$ available channels into two groups:  one group
contains $m_1^\prime$ channels which are available {in the (current) time
slot $k$}, the other group contains $m_2^\prime$ channels which are not
available {in time slot $k$}. Let us define the set
$\mathcal{Z}\!\!=\!\!\left\{(m_{1}^{\prime},m_{2}^{\prime})|
m^\prime\!\!=\!\!m_1^\prime\!\!+\!\!m_2^\prime, 0 \!\!\leq\!\!
m_1^\prime \!\!\leq\!\! m, 0 \!\!\leq\!\! m_2^\prime \!\!\leq\!\!
J\!\!-\!\!m\right\}.$ Then we can calculate the probability based
on the {i.i.d.} ON/OFF model in Section \ref{Sec:SystemModel}:
\begin{equation}\label{equ:TransitionProbabilitymm}
P_{mm^\prime}\!\!=\!\!\!\!\sum_{(m_{1}^{\prime},m_{2}^{\prime})
\in\mathcal{Z}}\!\!\left\{\!\!\binom{m}{m_1^\prime}\!\!p^{m_1^\prime}(1\!\!-\!\!p)^{m\!\!-\!\!m_1^\prime}\!\!\binom{J\!\!-\!\!m}{m_2^\prime}\!\!(1\!\!-\!\!q)^{m_2^\prime}q^{J\!\!-\!\!m\!\!-\!\!m_2^\prime}\right\}.
\end{equation}
Thus the channel transition function is $f_s(m) = m^\prime$
with probability $P_{mm^\prime}$ for all $m'\in{\mathcal{M}}$.

Let us define {$\boldsymbol{\omega_c}=\{\omega_{c,i},\forall
i\in\mathcal{D}\}$} as the number of secondary users who normally
complete their connections (not due to delay violation) {in time
slot $k$}. For example, a user may terminate a video streaming
session after the movie finishes, or terminate a VoIP session when
the conversation is over. If we assume that all users have the same
completion probability $P_f$ per slot when they are actively served,
then the event of having $\rho$ out of $\tau$ users completing their
connections (denoted as $f_c(\tau)= \rho$) happens with probability
{$\binom{\tau}{\rho}{P_f^\rho(1-P_f)^{\tau-\rho}}$}.

Finally, define $\omega_q$ as the number of secondary users who are
forced to terminate their connections {during time slot $k$}. The
state transition can be written as
{
\begin{equation}\label{equ:StatesTransation}
    \left\{\begin{split}
        &m^\prime=f_s(m), \\
        &\omega_{c,i} = f_c(u_{e,i}), \forall i\in\mathcal{D}, \\
        &\omega_q=\omega_{e,D_{\max}}-u_{e,D_{\max}}, \\
        &\omega_{e,0}^\prime=u_{e,0}-\omega_{c,0}, \\
        &\omega_{e,1}^\prime\!=\!u_{e,1} + (\omega_{e,0}+u_a-u_{e,0})-\omega_{c,1}, \\
        &\omega_{e,i}^\prime\!=\!u_{e,i}\! + \!(\omega_{e,i-1}-u_{e,i-1}) \!-\! \omega_{c,i}, \forall i \in [2,D_{\max}].
        \end{split}\right.
\end{equation}}

{Let us take a network with $J=10$ and $D_{max}=2$ as a numerical
example. In a particular time slot, assume that there are $m=7$ channels
available and a total of $6$ secondary users admitted in the system:
$1$ user with zero accumulative delay, $3$ users with $1$ time slot
of accumulative delay, and $2$ users with $2$ time slots of
accumulative delay. Then the state vector is $\{m,
\boldsymbol{\omega_e}\}=\{7,(1,3,2)\}$. Assume the control decision
is to admit $2$ new users and to allocate available channels to the
users except one of the new users, \ie $u=\{u_a,
\boldsymbol{u_e}\}=\{2,(2,3,2)\}$. Thus if there is no user
completing a connection in the current time slot and $m^\prime=4$
available channels in the next time slot, the system state becomes $\{m^\prime, \boldsymbol{\omega_e^\prime}\}=\{4,(2,4,2)\}$.}


\subsection{The Objective Function}

Our system optimization objective is to choose the optimal control decision for each
possible state to maximize the expected average revenue per time
slot (also called stage), \ie
\begin{equation}\label{equ:RevenueFunction-2}
    \max{\lim_{T \to
    \infty}E\left\{{\frac{1}{T}\sum_{k=0}^{T-1}{g(x_k,u(x_k))}}\right\}}.
\end{equation}
Here the revenue function is computed at the end of each time slot
$k$ as follows:
\begin{equation}\label{equ:RevenueFunction-3}
    g(x_k,u(x_k))\!\!=\!\! R_c\!\!\sum_{i=0}^{D_{\max}}{\omega_{c,i}(k)} \!\!+\!\! R_t\!\!\sum_{i=0}^{D_{\max}}{\omega_{e,i}}(k) \!\!-\!\! C_q\omega_q(k),
\end{equation}
where $R_c \geq 0$ is the reward of completing the connection of a
secondary user normally (without violating the maximum delay
constraints), $R_t \geq 0$ is the reward of maintaining the
connection of a secondary user, and $C_q  \geq 0$ is the penalty of
forcing to terminate a connection. By choosing different values of
$R_{c}$, $R_{t}$, and $C_{q}$, a network designer can achieve
different objective functions. In this paper, we assume that the
values of $R_{c}$, $R_{t}$, and $C_{q}$ are given parameters.


\section{Analysis of the MDP Problem}\label{Sec:Analysis}

We define a sequence of control actions as a policy,
{${\boldsymbol{\mu}}=\left\{u(x_0),u(x_1),\cdots\right\}$, where
$u(x_k)\in\mathcal{U}(x_k)$ for all $k$. A policy
is stationary if the choice of decision
only depends on the state and is independent of the time.}
Let
$$V_{\boldsymbol{\mu}}(\theta)=\lim_{T \to
\infty}E\left\{\frac{1}{T}\sum_{k=0}^{T-1}g(x_k,u(x_k))|x_0=\theta\right\}$$
be the expected revenue in state
$\theta$ under policy $\boldsymbol{\mu}$. Our objective is to find
the best policy $\boldsymbol{\mu^\ast}$ to optimize the average revenue
per stage starting from an initial state $\theta$.

Section \ref{Sec:sub.StateSpace} shows that any state can be visited
from any other state within finite stages under a stationary
policy.\footnote{A policy is stationary if the choice of decision
only depends on the state and is independent of the time.} Moreover,
since the revenue $g(x_k,u(x_k)) < \infty$ for all $x_k$ and $u$, we
have
\begin{equation}\label{equ:IndependentOfInitialState}
\lim_{T \to
\infty}\frac{1}{T}E\left\{\sum_{k=0}^{K}g(x_k,u(x_k))\right\}=0
\end{equation}
for any finite $K$. Therefore, we have the following proposition in
our prior preliminary results \cite{ACGlobecom10}.
\begin{proposition}\label{pro:InitialStateInpendent}
For any stationary policy, the average revenue per stage is
independent of the initial state.
\end{proposition}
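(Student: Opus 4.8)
The plan is to exploit the fact that the Markov chain induced by any stationary policy is finite, irreducible, and positive recurrent, as already established in Section \ref{Sec:sub.StateSpace}. For a finite irreducible Markov chain there is a unique stationary distribution $\pi$, and the ergodic theorem for Markov chains guarantees that the long-run time average of any bounded function of the state (and the stationary control, which is a deterministic function of the state) converges almost surely and in expectation to the average of that function under $\pi$, regardless of where the chain starts. Since $\pi$ does not depend on $x_0$, neither does the limit in the definition of $V_{\boldsymbol{\mu}}(\theta)$.

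Concretely, I would proceed as follows. First, fix a stationary policy $\boldsymbol{\mu}$, so that the control in every state $x$ is $u(x)$ and the state sequence $\{x_k\}$ is a time-homogeneous Markov chain on $\mathcal{S}$ with transition probabilities $P_{x_kx_{k+1}}(u(x_k))$. Second, invoke the structural results already proved: the chain is irreducible (all states communicate via state $\boldsymbol{0}$) and, being finite, all states are positive recurrent, so a unique stationary distribution $\pi$ exists. Third, apply the ergodic theorem to conclude that for every initial state $\theta$,
\begin{equation}\label{equ:ergodic-limit}
\lim_{T \to \infty} \frac{1}{T}\sum_{k=0}^{T-1} g(x_k,u(x_k)) = \sum_{x\in\mathcal{S}} \pi(x)\, g(x,u(x))
\end{equation}
almost surely, and since $g$ is bounded (each component of the revenue \eqref{equ:RevenueFunction-3} is bounded because $m\le J$, $\sum_i \omega_{e,i}\le J$, etc.), dominated convergence upgrades this to convergence in expectation. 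The right-hand side of \eqref{equ:ergodic-limit} is manifestly independent of $\theta$, which is exactly the claim $V_{\boldsymbol{\mu}}(\theta) = V_{\boldsymbol{\mu}}(\theta')$ for all $\theta,\theta'$. An alternative, more self-contained route avoiding the ergodic theorem: use coupling. Start two copies of the chain from $\theta$ and $\theta'$, evolve them independently; by irreducibility and positive recurrence they both hit state $\boldsymbol{0}$ in finite expected time, after which we couple them so the paths coincide; \eqref{equ:IndependentOfInitialState} then kills the contribution of the finite pre-coupling segment to the time average, so both limits agree.

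The main obstacle is not conceptual but one of rigor/bookkeeping: making sure that the revenue is evaluated as a function of the current state and the stationary control (so that $g(x_k,u(x_k))$ is genuinely a function of $x_k$ alone, and the ergodic average is well-defined), and verifying the uniform boundedness of $g$ so that the a.s. limit transfers to the expectation in \eqref{equ:RevenueFunction-2}. The coupling argument additionally requires justifying that the expected hitting time of $\boldsymbol{0}$ is finite and that the pre-coupling reward difference, divided by $T$, vanishes — both of which follow immediately from positive recurrence and the finiteness of $g$, i.e. from \eqref{equ:IndependentOfInitialState}. I expect the cleanest write-up to simply cite the standard ergodic theorem for finite irreducible Markov chains (e.g. \cite{Ross07-b}) and note that the limiting value equals $\sum_{x} \pi(x) g(x,u(x))$, which has no dependence on the initial state.
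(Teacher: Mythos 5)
Your proof is correct, but your primary route differs from the paper's. The paper argues by a first-passage decomposition: for two states $\theta$ and $\eta$, it splits the trajectory started at $\theta$ at the first hitting time $K_{\theta\eta}(\boldsymbol{\mu})$ of $\eta$, notes that $E\{K_{\theta\eta}(\boldsymbol{\mu})\}<\infty$ by positive recurrence so the prefix contributes nothing to the time average (this is exactly the role of \eqref{equ:IndependentOfInitialState}), and concludes $V_{\boldsymbol{\mu}}(\theta)=V_{\boldsymbol{\mu}}(\eta)$ directly---without ever identifying the common value. Your ``alternative coupling route'' is essentially this same argument in disguise (coupling at state $\boldsymbol{0}$ instead of splitting at the first visit to $\eta$), while your main route---the ergodic theorem for finite irreducible chains plus bounded convergence---is a genuinely different and somewhat stronger tool: it additionally identifies the limit as $\sum_{x}\pi(x)\,\hat{g}(x,u(x))$ with $\pi$ the unique stationary distribution, and it settles existence of the limit cleanly, at the price of invoking a heavier standard theorem where the paper's argument is elementary and self-contained. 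One bookkeeping point (which affects the paper's write-up equally and which you correctly flag): the per-slot revenue in \eqref{equ:RevenueFunction-3} depends on the random within-slot completions, not on $(x_k,u(x_k))$ alone, so in the ergodic argument you should replace $g$ by its conditional expectation $\hat{g}(x,u(x))$; by the tower property this leaves the expected average criterion unchanged, so the conclusion stands.
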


Next we give the following detailed proof of the proposition.
\begin{proof}
Since the revenue $g(x_k,u(x_k)) < \infty$ for all $x_k$ and $u$, we
have
\begin{equation}\label{equ:IndependentOfInitialState}
\lim_{T \to
\infty}\frac{1}{T}E\left\{\sum_{k=0}^{K}g(x_k,u(x_k))\right\}=0
\end{equation}
for any finite value of $K$. Consider a stationary policy
$\boldsymbol{\mu}$ whose control decision only depends on the state
of the system. According to the MDP formulation, all the states are
positive recurrent. So starting in state $\theta$, the process will
visit state $\eta$ infinitely often; and the expected time that the
process visits state $\eta$ from state $\theta$ is finite
\cite{Ross07-b}. Thus, any state in the state space can be visited
from any other state within enough stages (finite) under the
stationary policy. Therefore, we assume, under the policy
$\boldsymbol{\mu}$, the state $\eta \in \mathcal{S}$ is visited from
the state $\theta \in \mathcal{S}$. Let
$K_{\theta\eta}(\boldsymbol{\mu})$ be the number of time slots that
the system first passes state $\eta$ from state $\theta$ under
policy $\boldsymbol{\mu}$, then the average revenue per stage
corresponding to initial condition $x_0=\theta$ can be expressed as
\begin{equation}\label{equ:IndependentOfInit-0}
\begin{split}
V_{\boldsymbol{\mu}}(\theta)=&\lim_{T \to
\infty}\frac{1}{T}E\left\{\sum_{k=0}^{K_{\theta\eta}(\boldsymbol{\mu})-1}g(x_k,u(x_k))\right\}\\
&+\lim_{T \to
\infty}\frac{1}{T}E\left\{\sum_{k=K_{\theta\eta}(\boldsymbol{\mu})}^{T-1}g(x_k,u(x_k))\right\}.
\end{split}
\end{equation}
The first term in \eqref{equ:IndependentOfInit-0} is zero according
to \eqref{equ:IndependentOfInitialState}, while the second limit is
equal to $V_{\boldsymbol{\mu}}(\eta)$. So with
$E\{K_{\theta\eta}(\boldsymbol{\mu})\}<\infty$,
\begin{equation}\label{equ:IndependentOfInit}
    V_{\boldsymbol{\mu}}(\theta)=V_{\boldsymbol{\mu}}(\eta)=V_{\boldsymbol{\mu}},
\end{equation}
for any two states $\theta$ and $\eta$.
\end{proof}

As shown in Proposition \ref{pro:InitialStateInpendent}, the average
revenue per stage under any stationary policy is independent of the
initial state, and the \emph{average revenue maximization problem}
could be transformed into the \emph{stochastic shortest path
problem}. {More specifically, we pick a state $n$ as the start state
of the stochastic shortest path problem, and define an
\emph{artificial} termination state $t$ from the state $n$. The
transition probability from an arbitrary state $\theta$ to the
termination state $t$ satisfies $P_{\theta
t}(\boldsymbol{\mu})=P_{\theta n}(\boldsymbol{\mu})$, as show in
Fig. \ref{fig:ShortestPathProblem}.}

\begin{figure}
\centering
\includegraphics[width=1.0\columnwidth]{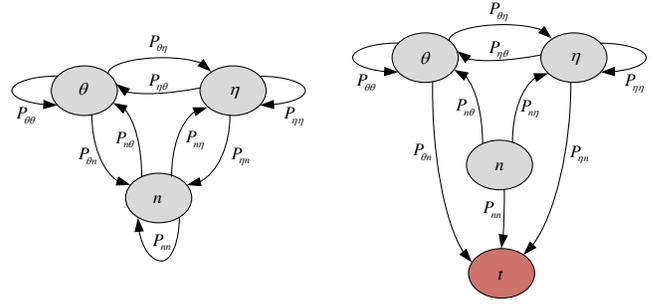}\\
\caption{Transition probability of the shortest path
problem.}\label{fig:ShortestPathProblem}
\end{figure}

In the stochastic shortest path problem, we define
$-\hat{g}(n,\boldsymbol{\mu})$ as the expected stage \emph{cost}
incurred at state $n$ under policy $\boldsymbol{\mu}$.
Let $A^\ast$ be the optimal average revenue per stage starting from
the state $n$ to the terminal state $t$, and let
$A^\ast-\hat{g}(n,\boldsymbol{\mu})$ be the normalized expected
stage cost. Then the normalized expected terminal cost from the
state $x_0=n$ under the policy $\boldsymbol{\mu}$,
$h^{\boldsymbol{\mu}}(n)=\lim_{N \to
\infty}{E\left\{\sum_{k=0}^{N-1}\left\{A^\ast-g(x_k,u(x_k))\right\}\right\}}$,
is zero when the policy $\boldsymbol{\mu}$ is optimal. The cost
minimization in the stochastic shortest path problem is equivalent
to the original average revenue per stage maximization problem. Let
$h^\ast(\theta)$ denote the optimal cost of the stochastic shortest
path starting at state $\theta \in \mathcal{S}$, then we get the
corresponding Bellman's equation as follows \cite{Ber05}:
\begin{equation}\label{equ:BellmanEquationForShortestPath-1}
h^{\boldsymbol{\mu}}(\theta)\!\!=\!\!\min_{\boldsymbol{\mu}}\left\{A^\ast\!\!-\!\!\hat{g}(\theta,\boldsymbol{\mu})\!\!+\!\!\sum_{\eta\in\mathcal{S}}p_{\theta\eta}(\boldsymbol{\mu})h^{\boldsymbol{\mu}}(\eta)\right\},~\theta\in\mathcal{S}.
\end{equation}
If $\boldsymbol{\boldsymbol{\mu}^\ast}$ is a stationary policy that
maximizes the cycle revenue, we have the following equations:
\begin{equation}\label{equ:BellmanEquationForShortestPath-3}
h^\ast(\theta)=A^\ast-\hat{g}(\theta,\boldsymbol{\mu}^\ast)+\sum_{\eta\in\mathcal{S}}p_{\theta\eta}(\boldsymbol{\mu}^\ast)h^\ast(\eta),~\theta\in\mathcal{S}.
\end{equation}
The Bellman's equation is an iterative way to solve MDP problems.
Next we show that solving the Bellman's equation
(\ref{equ:BellmanEquationShortestPath}) in the stochastic shortest
path problem leads to the optimal solution.

\begin{proposition}\label{pro:Convergence}
For the stochastic shortest path problem, given any initial
values of terminal costs $h_0(\theta)$ for all states $\theta\in\mathcal{S}$, the sequence
$\{h_l(\theta),l=1,2,\ldots\}$ generated by the iteration
\begin{equation}\label{equ:BellmanEquationShortestPath}
h_{l+1}(\theta)\!\!=\!\!\min_{\boldsymbol{\mu}}{\left\{A^\ast\!\!-\!\!\hat{g}(\theta,\boldsymbol{\mu})\!\!+\!\!\sum_{\eta\in\mathcal{S}}P_{\theta\eta}(\boldsymbol{\mu})h_l(\eta)\right\},
{\theta\in\mathcal{S},}}
\end{equation}
converges to the optimal terminal cost $h^\ast(\theta)$ for each state
$\theta$.
\end{proposition}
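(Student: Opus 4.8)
The plan is to recognize the iteration \eqref{equ:BellmanEquationShortestPath} as the standard value iteration for a stochastic shortest path (SSP) problem, and to invoke the classical convergence theorem for SSP (see Bertsekas \cite{Ber05}) once we verify its two hypotheses: (i) there exists at least one \emph{proper} policy, i.e.\ a stationary policy under which the termination state $t$ is reached with probability one from every state; and (ii) every \emph{improper} policy incurs infinite cost from at least one state — or, in the variant used here, all policies are proper. The positive recurrence and irreducibility of the underlying Markov chain, established in Section \ref{Sec:sub.StateSpace}, is exactly what we need for (i): since every state communicates with the chosen state $n$, and the artificial termination transition $P_{\theta t}(\boldsymbol{\mu})=P_{\theta n}(\boldsymbol{\mu})$ is triggered whenever the chain would visit $n$, the expected number of stages before absorption into $t$ is finite under every stationary policy. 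Hence \emph{every} stationary policy is proper, and condition (ii) is vacuous.

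First I would set up the Bellman operator. Define $(T h)(\theta) = \min_{\boldsymbol{\mu}}\bigl\{A^\ast - \hat g(\theta,\boldsymbol{\mu}) + \sum_{\eta\in\mathcal{S}} P_{\theta\eta}(\boldsymbol{\mu}) h(\eta)\bigr\}$, so the iteration is $h_{l+1} = T h_l$. The key structural facts are: $T$ is monotone ($h\le h'$ componentwise implies $Th\le Th'$), and for any proper policy the associated linear operator $T_{\boldsymbol{\mu}}$ is a contraction with respect to a suitable weighted sup-norm — the weights coming from the expected absorption times, which are finite by positive recurrence. Because the state space $\mathcal{S}$ is finite and the control space $\mathcal{U}$ is finite (cardinality $J^{D_{\max}+2}$, as noted), the minimum is attained and there are only finitely many stationary policies, so one can take a common contraction modulus over all of them.

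Next I would argue convergence in two standard steps. Step one: show $h^\ast$ is a fixed point of $T$ — this is \eqref{equ:BellmanEquationForShortestPath-3}, with the minimizing policy $\boldsymbol{\mu}^\ast$. Step two: show the fixed point is unique and that $T^l h_0 \to h^\ast$ for arbitrary $h_0$. Uniqueness follows because if $h = Th$ then $h = T_{\boldsymbol{\mu}} h$ for the minimizing $\boldsymbol{\mu}$, which is proper, and a proper policy's cost-to-go is the unique solution of $h = T_{\boldsymbol{\mu}} h$ (invertibility of $I - P_{\boldsymbol{\mu}}$ restricted to non-terminal states, again a consequence of finite absorption time). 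For the convergence from an arbitrary start, I would sandwich: pick scalars $c$ large enough that $h^\ast - c\,\mathbf{1} \le h_0 \le h^\ast + c\,\mathbf{1}$, apply monotonicity to get $T^l(h^\ast - c\mathbf{1}) \le h_l \le T^l(h^\ast + c\mathbf{1})$, and then show both bracketing sequences converge to $h^\ast$ because the perturbation $c\mathbf{1}$ is damped geometrically — each application of $T$ multiplies the worst-case perturbation by the probability of \emph{not} yet being absorbed over one ``cycle,'' which is strictly less than one uniformly over policies. Letting $l\to\infty$ gives $h_l\to h^\ast$.

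The main obstacle is the careful justification that absorption is "uniformly geometric" across all stationary policies, i.e.\ establishing the common contraction modulus. One must translate the qualitative statement "all states positive recurrent, hence expected return time to $n$ is finite" (Proposition \ref{pro:InitialStateInpendent}'s proof) into the quantitative statement that $\sup_{\boldsymbol{\mu}}\max_{\theta} P\{\text{not absorbed within } N \text{ steps}\mid x_0=\theta\} < 1$ for some finite $N$ — this is where finiteness of $\mathcal{S}$ and $\mathcal{U}$ is essential, since it lets us replace a supremum over policies with a maximum and avoid any degeneracy. Once that bound is in hand, the rest is the textbook SSP argument, and I would simply cite \cite{Ber05} for the packaged version rather than re-derive the weighted-norm contraction in full.
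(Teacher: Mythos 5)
Your proposal is correct, and it rests on exactly the same pivotal fact as the paper's proof: a uniform bound $\rho=\max_{(\theta,\boldsymbol{\mu})}P\{x_\gamma\neq t\,|\,x_0=\theta,\boldsymbol{\mu}\}<1$ on the non-absorption probability over a $\gamma$-slot window, made possible by the finiteness of the state and control spaces and the properness of all stationary policies (which the artificial termination construction $P_{\theta t}=P_{\theta n}$ inherits from positive recurrence). The difference is one of packaging. You work at the level of the Bellman operator: fixed-point existence via \eqref{equ:BellmanEquationForShortestPath-3}, uniqueness via properness of the minimizing policy, and convergence from an arbitrary $h_0$ by a monotonicity sandwich $h^\ast\pm c\mathbf{1}$ whose perturbation is damped geometrically per $\gamma$-step cycle, citing the packaged SSP value-iteration theorem for the weighted-norm (or $\gamma$-stage) contraction details. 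The paper instead re-derives that theorem explicitly and avoids operator machinery: it splits the infinite-horizon cost of an arbitrary policy into the first $K\gamma$ slots and a tail bounded by $\rho^K\Gamma/(1-\rho)$, bounds the contribution of the initial guess by $|E\{h_0(x_{K\gamma})\}|\leq\rho^K\max_{\theta}|h_0(\theta)|$, recognizes the minimum over policies of the resulting $K\gamma$-stage cost as $h_{K\gamma}(x_0)$ produced by \eqref{equ:BellmanEquationShortestPath}, and lets $K\to\infty$ (handling the intermediate indices $K\gamma+q$ separately). Your route buys brevity and reuse of standard theory, and it correctly isolates the uniform-absorption bound as the one thing that must be checked for this model; the paper's route buys a self-contained, quantitative convergence estimate (explicit $\rho^K$ error bounds) without appealing to contraction-mapping results. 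Either is acceptable; if you keep the citation-based version, you should still state explicitly, as you do in your final paragraph, why $\sup_{\boldsymbol{\mu}}\max_{\theta}P\{\text{no absorption in }\gamma\text{ steps}\}<1$ holds here, since that is the only model-specific ingredient.
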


\begin{proof}
For an arbitrary state $\theta$\ and an admissible policy $\boldsymbol{\mu}$,
there exists an integer $\gamma$ satisfying
$P\{x_\gamma \neq t|x_0=\theta,\boldsymbol{\mu}\}<1$ \cite{Ber05-b}.
Let $\rho=\max_{(\theta, \boldsymbol{\mu})}{P\{x_\gamma \neq
t|x_0=\theta,\boldsymbol{\mu}\}}$, then $\rho<1$ and $
P\left\{x_{2\gamma}\neq t|x_0=\theta,\boldsymbol{\mu}\right\}
=P\{x_{2\gamma}\neq t|x_{\gamma}\neq t, x_0=\theta,
\boldsymbol{\mu}\} \cdot P\{x_{\gamma}\neq
t|x_0=\theta,\boldsymbol{\mu}\} \leq \rho^2.$
Therefore, we get
$P\{x_{\phi\gamma} \neq t | x_0=\theta, \boldsymbol{\mu}\} \leq
\rho^\phi.$

We break down the cost $h^{\boldsymbol{\mu}}(x_0)$ into the portions
incurred over the first $K\gamma$ time slots ($K$ is a positive
integer) and over the remaining time slots,
\ie
\begin{equation}\label{TheCostBrokeDown1}
    \begin{split}
        h^{\boldsymbol{\mu}}(x_0)=&\lim_{N \to \infty}{E\left\{\sum_{k=0}^{N-1}\left\{A^\ast-g(x_k,u(x_k))\right\}\right\}}\\
        =& E\left\{\sum_{k=0}^{K\gamma-1}\left\{A^\ast-g(x_k,u(x_k))\right\}\right\} \\
        &+\lim_{N \to \infty}{E\left\{\sum_{k=K\gamma}^{N-1}\left\{A^\ast-g(x_k,u(x_k))\right\}\right\}}.
    \end{split}
\end{equation}
Define $\Gamma=\gamma
\max_{(\theta,\boldsymbol{\mu})}{\left|A^\ast-\hat{g}(\theta,\boldsymbol{\mu})\right|}$,
which denotes the upper bound on the cost of an $\gamma$-slot cycle
when termination does not occur during the cycle. Then, the expected
cost during the $K$-th $\gamma$-slot cycle (time slots $K\gamma$ to
$(K+1)\gamma-1$) is upper bounded by $\rho^K \Gamma$, so that
\begin{equation}\label{equ:CostBound1}
\begin{split}
&E\left\{\left|h^{\boldsymbol{\mu}}(x_0)-\sum_{k=0}^{K\gamma-1}\{A^\ast-g(x_k,u(x_k))\}\right|\right\}\\
&= \left|\lim_{N \to
\infty}{E\left\{\sum_{k=K\gamma}^{N-1}\{A^\ast-g(x_k,u(x_k))\}\right\}}\right|\\
&\leq \Gamma\sum_{\phi=K}^\infty{\rho^\phi}=\frac{\rho^K
\Gamma}{1-\rho}.
\end{split}
\end{equation}
Let $h_0(x_0)$ be a terminal cost function as defined in the
proposition, and then its expected value under $\boldsymbol{\mu}$
after $K\gamma$ time slots is bounded by
\begin{equation}\label{TheCostBrokeDown2}
\begin{split}
    \left|E\{h_0(x_{K\gamma})\}\right|&\!\!=\!\!\left|\sum_{\theta\in\mathcal{S}}{P(x_{K\gamma}=\theta|x_0,\boldsymbol{\mu})h_0(\theta)}\right|\\
    &\!\!\leq\!\! \left(\sum_{\theta\in\mathcal{S}}{P(x_{K\gamma}=\theta|x_0,\boldsymbol{\mu})}\right) \!\! \max_{\theta\in\mathcal{S}}{|h_0(\theta)|}.
\end{split}
\end{equation}
Since the probability that $x_{K\gamma} \neq t$ is less than or
equal to $\rho^K$ for any policy, we have
$\left|E\{h_0(x_{K\gamma})\}\right| \leq \rho^K
\max_{\theta\in\mathcal{S}}{|h_0(\theta)|}$. Therefore, we can get
\begin{equation}\label{TheCostBrokeDown3}
    \begin{split}
        &-\rho^K \max_{\theta\in\mathcal{S}}{|h_0(\theta)|} + h^{\boldsymbol{\mu}}(x_0) - \frac{\rho^K
        \Gamma}{1-\rho}\\
        &\leq E\left\{h_0(x_{K\gamma}) +\sum_{k=0}^{K\gamma-1}\{A^\ast-g(x_k,u(x_k))\}\right\}\\
        &\leq \rho^K \max_{\theta\in\mathcal{S}}{|h_0(\theta)|} + h^{\boldsymbol{\mu}}(x_0) + \frac{\rho^K \Gamma}{1-\rho}.
    \end{split}
\end{equation}
The expected value in the middle term of the above inequalities is
the $K\gamma$-slot cost of policy $\boldsymbol{\mu}$ starting from
state $x_0$ with a terminal cost $h_0(x_{K\gamma})$. The minimum of
this cost over all $\boldsymbol{\mu}$ is equal to the value
$h_{K\gamma}(x_0)$, which is generated by the dynamic programming
recursion (\ref{equ:BellmanEquationShortestPath}) after $K\gamma$
iterations. Thus, by taking the minimum over $\boldsymbol{\mu}$ in
\eqref{TheCostBrokeDown3}, we obtain for all $x_0$ and $K$,
\begin{equation}\label{TheCostBrokeDown4}
\begin{split}
        -\rho^K \max_{\theta\in\mathcal{S}}{|h_0(\theta)|} + h^\ast(x_0) - \frac{\rho^K \Gamma}{1-\rho}
        \leq h_{K\gamma}(x_0)&\\
        \leq \rho^K \max_{\theta\in\mathcal{S}}{|h_0(\theta)|} + h^\ast(x_0) + \frac{\rho^K
        \Gamma}{1-\rho}.&
\end{split}
\end{equation}
And by taking the limit when $K \to \infty$, the terms involving
$\rho^{K}$ will go to zero, and we obtain $\lim_{K \to
\infty}{h_{K\gamma}(x_0)}=h^\ast(x_0)$ for all $x_0$. In addition,
since $\left|h_{K\gamma+q}(x_0)-h_{K\gamma}(x_0)\right| \leq
\rho^K\Gamma,\; q=1,2,\cdots,\gamma-1$, we have $\lim_{K \to
\infty}{h_{K\gamma+q}(x_0)} = \lim_{K \to \infty}{h_{K\gamma}(x_0)}
= h^\ast(x_0)$ for all $q=1,\cdots,\gamma-1$.
\end{proof}

{Proposition \ref{pro:Convergence} shows that solving the Bellman's
equation leads to the optimal average revenue $A^\ast$ and the
optimal differential cost $h^\ast$. The Bellman's equation can often
be solved using  value iteration or policy iteration algorithms;
details can be found in \cite{Ber05-b} and \cite{Ber07-b}.} Once
having $A^\ast$ and $h^\ast$, we can compute the optimal control
decision {$u^\ast(\theta)$}  that minimizes the immediate
differential cost of the current stage plus the remaining expected
differential cost for state $\theta$, \ie
{\begin{equation}\label{equ:OptimalControlSolution}
u^\ast(\theta)\!=\!\arg{\min_{\boldsymbol{\mu}}{\left\{A^\ast\!-\!\hat{g}(\theta,\boldsymbol{\mu})\!+\!\sum_{\eta\in\mathcal{S}}{p_{\theta\eta}(\boldsymbol{\mu})h^\ast(\eta)}\right\}}}.
\end{equation}}


\section{Suboptimal Control and Dynamic Programming}\label{Sec:DynamicProgramming}

Solving the Bellman's equation does not lead to a closed-form
optimal control policy, and the iterative computation is
time-consuming for our problem with a large state space. To resolve
this issue, a broad class of suboptimal control methods referred as
\emph{approximate dynamic programming} (ADP) have been proposed in
\cite{Ber05}. Next we first propose a simple heuristic control
policy in Section \ref{Sec:DP.solution}. Then in Section
\ref{Sec:DP.optimal}, we will improve the performance of the
heuristic algorithm by using the idea of \emph{rollout algorithm}
(which is a class of ADP algorithms). It is known that the
suboptimal policy based on the rollout algorithm is identical to the
policy obtained by a \emph{single policy improvement step} of the
classical policy iteration method \cite{Ber05-b,Ber07-b}.


\subsection{Heuristic Control Policy}\label{Sec:DP.solution}

Several observations can help us with the suboptimal algorithm
design. First, the channel state transitions are determined by the
underlying primary traffic and are not affected by any control policy.
Second, all secondary users experience the same channel availability
independent of their locations, and all channels are homogenous and
provide the same data rates. This means that we are interested in
\emph{how many} users to admit rather than \emph{who} to admit, and
we only care \emph{how many} channels are available instead of
\emph{which} are available. This motivates us to first consider
admission control and channel allocation separately.

{For the admission control, we first consider a simple
\emph{threshold-based} strategy, where a new user will be admitted
if and only if the total number of admitted users is smaller than
the threshold. Given a fixed admission control threshold $T_{h}$,}
there are many ways of performing the channel allocation. To resolve
this issue, we propose the \emph{largest-delay-first} strategy,
which allocates available channels to admitted users with the
largest accumulated delay first.

\begin{proposition}\label{pro:OptimalChannelAllocation}
The largest-delay-first channel allocation policy is optimal under
any fixed threshold-based admission control policy.
\end{proposition}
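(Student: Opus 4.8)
The plan is to fix the admission threshold $T_h$ and regard the remaining decision --- which of the admitted users receive the currently available channels --- as the only control, so that the restricted problem is still a stochastic shortest path problem to which Proposition~\ref{pro:Convergence} applies. I would then prove by induction on the value-iteration index $l$ in \eqref{equ:BellmanEquationShortestPath} the joint statement: (i) each iterate $h_l$ is \emph{monotone} with respect to a partial order $\succeq$ on states that formalizes ``the admitted users carry smaller accumulated delays'', and (ii) for every state the largest-delay-first (LDF) allocation attains the minimum in the Bellman operator applied to $h_l$. Letting $l\to\infty$ and invoking Proposition~\ref{pro:Convergence}, the LDF allocation attains the minimum in Bellman's equation \eqref{equ:BellmanEquationForShortestPath-3}, so the stationary LDF policy is optimal under the fixed threshold by the standard stochastic-shortest-path argument already used in the paper.

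Two observations drive the induction. First, it is optimal to hand out \emph{all} available channels: serving a delay-$i$ user either completes its session (reward $R_c\ge 0$ in \eqref{equ:RevenueFunction-3}) or keeps it at delay $i$ instead of letting it drift to delay $i+1$ per \eqref{equ:StatesTransation}, which is weakly preferable once $h_l$ is monotone; hence in every state one may assume exactly $\min\{m,\sum_i\omega_{e,i}\}$ users are served. Second --- the interchange step --- take any served allocation that is not LDF, i.e.\ some delay-$i$ user is served while some delay-$j$ user with $j>i$ is not, and move that one channel from the delay-$i$ user to the delay-$j$ user. By \eqref{equ:StatesTransation} this swap does not change the number of served users (so, coupling the i.i.d.\ completion coins user-by-user, the expected $R_c$ term is unchanged), leaves the $R_t$ term of \eqref{equ:RevenueFunction-3} untouched, and can only \emph{decrease} the forced-termination count $\omega_q=\omega_{e,D_{\max}}-u_{e,D_{\max}}$ (strictly, by one, when $j=D_{\max}$), so the immediate expected revenue weakly increases. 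Coupling the completion coins and the exogenous channel transitions across the two continuations, the next state after the swap is $\succeq$ the next state before it, so induction hypothesis (i) gives a weakly smaller expected cost-to-go. Iterating such swaps turns any served allocation into the LDF allocation without ever lowering the objective, proving (ii); rerunning the same coupling from a $\succeq$-larger state then propagates monotonicity to $h_{l+1}$, which closes the induction.

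The main obstacle is pinning down $\succeq$ so that it is simultaneously (a) strong enough that monotonicity of $h_l$ plus the one-channel swap yields LDF optimality, and (b) preserved by one stage of the dynamics \eqref{equ:StatesTransation}, which combines the deterministic ``delay $+1$'' drift of unserved users, the random removal of completed users, and the hard boundary at $D_{\max}$ where unserved users are ejected with penalty $C_q$. A suffix-count order (``fewer users at delay $\ge d$ for every $d$'') is the natural first guess but is not preserved once an unserved delay-$i$ user drifts past a served delay-$j$ user; the order that seems to work compares the sorted vectors of individual user delays in the majorization sense, so the technical heart of the argument will be to show that the stage revenue in \eqref{equ:RevenueFunction-3} under an all-channels allocation is Schur-concave in the delay profile, and that the LDF allocation together with the coupled completions and channel transitions is order-preserving for this majorization order. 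Everything else --- the reduction to the restricted MDP, the all-channels reduction, and the passage to the limit --- is routine given Proposition~\ref{pro:Convergence}.
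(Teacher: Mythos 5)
Your route (value-iteration induction: monotonicity of $h_l$ in a delay-profile order plus a one-channel interchange, then letting $l\to\infty$ via Proposition~\ref{pro:Convergence}) is genuinely different from the paper's, which never touches the value function: the paper compares LDF against an arbitrary allocation through a long-run accounting identity (the ``virtual channel'' conservation law \eqref{equ:EquationUsedResource}--\eqref{equ:EquationAvailableChannelsNum}), using that $\Omega_c$, $\Omega_e$, $L_c$ are allocation-independent to conclude $\Omega_q\le\Omega_q^{(g)}$ directly. Unfortunately, as written your induction does not close, and the failure occurs exactly in the case where LDF matters most: the interchange at the $D_{\max}$ boundary. Take $j=D_{\max}$, $i<j$, and couple the completion coin to the serving slot. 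Under the non-LDF allocation the delay-$D_{\max}$ user is ejected (cost $C_q$ now) and, because the admission threshold keeps the population full, is replaced next slot by a fresh delay-$0$ user; under the LDF allocation that user is retained at delay $D_{\max}$ (or completes) while the delay-$i$ user drifts to $i+1$. So the \emph{non-LDF} continuation state has the strictly better delay profile (e.g.\ $\{i,0\}$ versus $\{i+1,D_{\max}\}$ on the no-completion branch, $\{0,0,\dots\}$ versus $\{i+1,\dots\}$ on the completion branch). No partial order on delay profiles alone --- suffix counts, majorization, or otherwise --- can make the swapped continuation dominate, because the comparison genuinely trades the $C_q$ already collected against a better future profile. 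Monotonicity of $h_l$ therefore pushes in the \emph{wrong} direction here; to finish you would need a quantitative companion estimate, e.g.\ that resetting one user's accumulated delay from $d$ to $0$ (with the attendant refill) improves the optimal cost-to-go by at most $C_q$, a Lipschitz-type bound on $h_l$ that your induction hypothesis does not carry and that monotonicity cannot supply.

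The rest of your outline (restricting to the fixed-threshold MDP, the ``serve all available channels'' reduction, the benign interchange cases $j<D_{\max}$) is sound, so the proposal is repairable: either strengthen the induction with the bounded-difference property above, or abandon the one-step argument in favor of a whole-sample-path interchange, or --- most economically --- argue in steady state as the paper does, where the allocation-independence of the completion and population rates reduces everything to showing LDF minimizes the forced-termination rate $\Omega_q$, which the conservation identity delivers without any value-function machinery.
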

\begin{proof}\label{proof:OptimalChannelAllocation} 
Under a threshold-based admission control policy, the number of
admitted users in the system is constant in any time slot. The
objective function in (\ref{equ:RevenueFunction-2}) is equal to the
maximization problem $\max{E\left\{g(x,u(x))\right\}}$ due to
ergodicity of the instant revenue $g(x_k,u(x_k))$. Let
$\Omega_c=E\left\{\sum_{i=0}^{D_{max}}{\omega_{c,i}}\right\}$ be the
expected number of normally completed users at the end of each time slot,
$\Omega_e=E\left\{\sum_{i=0}^{D_{max}}{\omega_{e,i}}\right\}$ be the
expected number of users in the network at the end of each time slot, and
$\Omega_q=E\left\{\omega_q\right\}$ be the expected number of
forcefully terminated users at the end of each time slot. Then
\begin{equation}
\max{E\left\{g(x,u(x))\right\}}=\max{\left\{R_c\Omega_c+R_t\Omega_e-C_q\Omega_q\right\}}.
\end{equation}
Under the threshold-based admission control policy,
{$\Omega_e=\sum_{i=0}^{D_{max}}{\omega_{e,i}}$} in all time slot
$k$ and equals to the threshold.

In the largest-delay-first policy, let $L_c$ be the expected length
of a normally completed session, $D_c$ the expected delay of a
normally completed session, and $L_q$ the expected length of a
forcefully terminated session. Now let us consider an arbitrary
channel allocation policy as the benchmark, and we use the
superscript ${(g)}$ to denote all parameters corresponding to this
particular channel allocation policy, \ie $\Omega_c^{(g)}$,
$\Omega_e^{(g)}$, $\Omega_q^{(g)}$, $L_c^{(g)}$, $D_c^{(g)}$, and
$L_q^{(g)}$. We will show that the largest-delay-first policy is no
worse than this benchmark policy, which will prove the proposition.

Because all actively served users have the same completion
probability $P_f$ independent of the channel allocation decisions,
we can show that $\Omega_c=\Omega_c^{(g)},$
$\Omega_e=\Omega_e^{(g)}$, and  $L_c = L_c^{(g)}$. Since the
largest-delay-first policy always allocates available channels to
the secondary users with the largest delay, we have $D_c \geq
D_c^{(g)}.$

Here comes the critical proof step. We consider $\Omega_e$ virtual
channels, one for each user in the network. If the secondary user is
allocated an available \emph{physical} channel, then its virtual
channel is ``idle'' in that time slot; otherwise its virtual channel
is  ``busy'' and causes a delay. In the long run (when $T \to
\infty$), we have the following:
\begin{equation}\label{equ:EquationUsedResource}
\begin{split}
\Omega_e\cdot T&=\Omega_c T(L_c+D_c)+\Omega_q T(L_q+D_{max})\\
&=\Omega_c^{(g)} T(L_c^{(g)}+D_c^{(g)})+\Omega_q^{(g)}
T(L_q^{(g)}+D_{max}).
\end{split}
\end{equation}
Based on the relationships we just derived in the previous
paragraph, we have
\begin{equation}\label{equ:LessEqual}
\Omega_q(L_q+D_{max}) \leq \Omega_q^{(g)}(L_q^{(g)}+D_{max}).
\end{equation}
Since the number of available channels is the
same under the two channel allocation policies in any time slot , we have
\begin{equation}\label{equ:EquationAvailableChannelsNum}
\Omega_c T L_c + \Omega_q T L_q = \Omega_c^{(g)} T L_c^{(g)} +
\Omega_q^{(g)} T L_q^{(g)}.
\end{equation}
Since $\Omega_c=\Omega_c^{(g)}$ and $L_c = L_c^{(g)}$,
(\ref{equ:EquationAvailableChannelsNum}) implies that $\Omega_q L_q
=\Omega_q^{(g)} L_q^{(g)}$. Together with inequality
(\ref{equ:LessEqual}), we have $\Omega_q \leq \Omega_q^{(g)}.$

Because $\Omega_c=\Omega_c^{(g)},$ $\Omega_e=\Omega_e^{(g)}$, and
$\Omega_q \leq \Omega_q^{(g)}$, we have
$R_c\Omega_c+R_t\Omega_e-C_q\Omega_q \geq
R_c\Omega_c^{(g)}+R_t\Omega_e^{(g)}-C_q\Omega_q^{(g)},$ \ie
$\max{E\left\{g(x_k,u(x_k))\right\}} \geq
\max{E\left\{g^{(g)}(x_k,u(x_k))\right\}}.$ This shows that our
proposed largest-delay-first channel allocation policy is no worse
than any channel allocation algorithm, and thus is optimal with a
threshold-based admission control.
\end{proof}

For performance comparison, we further define two benchmark
channel allocation strategies.
\begin{itemize}
  \item \textbf{Strategy 1:} allocate the available channels to the admitted users with the smallest accumulated delays. If there is a tie, break it randomly.
  \item \textbf{Strategy 2:} allocate the available channels to the admitted users randomly.
\end{itemize}

\begin{figure}
\centering
\includegraphics[width=1.0\columnwidth]{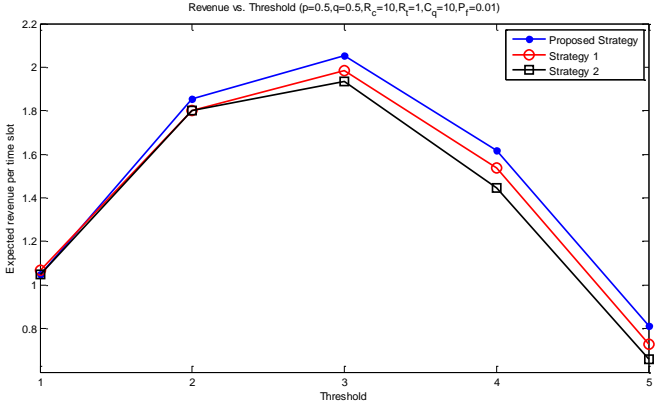}\\
\caption{Revenue versus threshold of three different strategies
($J=5, D_{\max}=5$).}\label{fig:RevenueVSThreshold_3}
\end{figure}

\begin{figure}
\centering
\includegraphics[width=1.0\columnwidth]{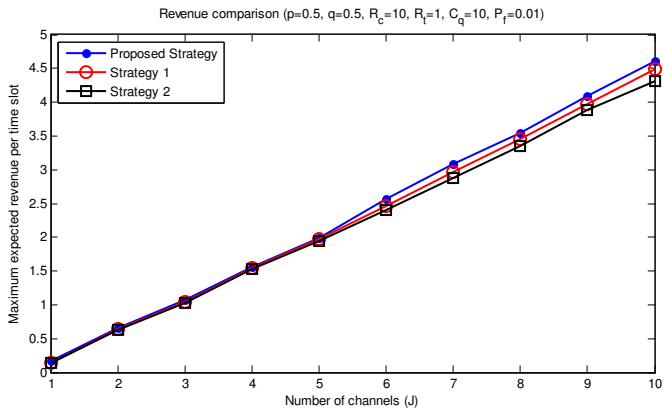}\\
\caption{Maximum expected revenue comparison with different
channels.}\label{fig:RevenueVSThreshold_max}
\end{figure}

{In Fig.~\ref{fig:RevenueVSThreshold_3} and Fig.~\ref{fig:RevenueVSThreshold_max}, we compare the proposed channel
allocation policy and the two benchmark policies with different total number of channels. }
All three policies follow the same
threshold-based admission control policies. From these figures, we
observe that the proposed largest-delay-first policy is no worse
than the other two under all choices of parameters.


\subsection{Rollout Control Policy}\label{Sec:DP.optimal}

The heuristic algorithm proposed in Section \ref{Sec:DP.solution}
can be further improved by the rollout algorithm. The general
background of the rollout algorithm is in Appendix
\ref{Sec:DP.rollout}. In this subsection, based on the analysis of
the heuristic control policy, we propose a simplified rollout
algorithm (\emph{rollout control policy}) to further improve the
performance.

{Consider two different user states $\boldsymbol{\omega_e^{(1)}}$ and
$\boldsymbol{\omega_e^{(2)}}$ that have the same number of secondary
users. If it is possible for transit from state $\boldsymbol{\omega_e^{(1)}}$ to state $\boldsymbol{\omega_e^{(2)}}$ under a particular
channel condition without admitting any new user,
then obviously the total
time delay of $\boldsymbol{\omega_e^{(1)}}$ summed over all users
must be less than that of $\boldsymbol{\omega_e^{(2)}}$
(as each user either has the same delay or a larger delay during the transition).
We give the following definitions:
\begin{definition}[User State Comparison]\label{def:UserState}
Consider two different user states $\boldsymbol{\omega_e^{(1)}}$ and
$\boldsymbol{\omega_e^{(2)}}$ that have the same number of secondary
users. If it is possible to transit from state $\boldsymbol{\omega_e^{(1)}}$ to state $\boldsymbol{\omega_e^{(2)}}$ under a particular
channel condition without admitting any new user,
then $\boldsymbol{\omega_e^{(1)}}$ is \emph{better} than $\boldsymbol{\omega_e^{(2)}}$, denoted
$\boldsymbol{\omega_e^{(1)}} \gtrdot \boldsymbol{\omega_e^{(2)}}$.
\end{definition}
\begin{definition}[Quality of Channel State]\label{def:ChannelState}
Consider a user state $\boldsymbol{\omega_e^{(1)}}$ and a channel
state $m$. The channel state $m$ is $\boldsymbol{B}$ (Bad) for the
user state $\boldsymbol{\omega_e^{(1)}}$ if and only if $m$ is less
than the total number of users in $\boldsymbol{\omega_e^{(1)}}$.
Otherwise, the channel state $m$ is $\boldsymbol{G}$ (Good) for the
user state $\boldsymbol{\omega_e^{(1)}}$.
\end{definition}}

Now consider a heuristic control policy with the admission control
threshold $N_{th}$ and largest-delay-first channel allocation
mechanism. Under this policy, we can divide the infinite-horizon
process into infinite number of segments separated by the time slots
in which there is at least one user leaving the system (normal
completion or forced termination). Then we can define a new average
revenue $\bar{g}(N_{th},\theta)$ and its expectation
$\bar{G}(N_{th},\theta)$) over each segment. Due to the
threshold-based admission control, we will only admit new users in
the first slot of a segment.

\begin{definition}[Average Revenue and Expected Average Revenue]\label{def:AveRev}\label{def:Expectation}
If the network state is $\theta$ at the beginning of time slot $k$,
and at least one user leaves the system for the first time (normal
completion or forced termination) in time slot $k+\delta$, we define
the average revenue over the period $[k,k+\delta]$ as
\begin{equation}\label{equ:InstantRevenue}
\bar{g}(N_{th},\theta)\!=\!\frac{n_c(N_{th},\theta)}{\delta+1}R_c\!-\!\frac{n_d(N_{th},\theta)}{\delta+1}C_q\!+\!N_{th}R_t,
\end{equation}
where $n_c(N_{th},\theta)$ is number of users completing connections
normally in time slot $k+\delta$, and {$n_d(N_{th},\theta)$} is number of
users being forced to terminate in time slot $k+\delta$.
The expected average revenue is denoted as
\begin{equation}\label{equ:G_bar}
\begin{split}
\bar{G}(N_{th},\theta)&=E\{\bar{g}(N_{th},\theta)\}\\
&=\!N_c(N_{th},\theta)R_c\!-\!N_d(N_{th},\theta)C_q\!+\!N_{th}R_t,
\end{split}
\end{equation}
where
$N_c(N_{th},\theta)=E\left\{\frac{n_c(N_{th},\theta)}{\delta+1}\right\}$
and
$N_d(N_{th},\theta)=E\left\{\frac{n_d(N_{th},\theta)}{\delta+1}\right\}.$
\end{definition}

The \emph{expected revenue} in Definition \ref{def:AveRev} is
different from the \emph{instant revenue} in
(\ref{equ:RevenueFunction-3}). The expected revenue is defined under
a very special case, where no new users are admitted except in the
first time slot and no users leave the network except in the last
time slot of the interval. The instant revenue defined in
(\ref{equ:RevenueFunction-3}) is the revenue for a generic time
slot. Furthermore, $\bar{G}(N_{th},\theta)$ represents the expected
average revenue per time slot when maintaining a fixed number of
users until someone leaves. Although the precise value of
$\bar{G}(N_{th},\theta)$ is hard to compute explicitly,  we have the
following result as a corollary of Proposition
\ref{pro:OptimalChannelAllocation}.

\begin{proposition}\label{pro:G}
Given any fixed $N_{th}$ and $\theta$, the largest-delay-first channel
allocation policy achieves the maximum
$\bar{G}(N_{th},\theta)$.
\end{proposition}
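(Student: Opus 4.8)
The plan is to derive Proposition \ref{pro:G} as a direct corollary of Proposition \ref{pro:OptimalChannelAllocation}, since the quantity $\bar{G}(N_{th},\theta)$ is exactly the per-slot expected revenue of the constrained system in which a fixed pool of $N_{th}$ users is maintained and channels are allocated according to some rule until a departure occurs. First I would observe that, because the admission control is threshold-based with threshold $N_{th}$, the number of users in the system is constant and equal to $N_{th}$ in every slot, so the term $N_{th}R_t$ in \eqref{equ:G_bar} is fixed and independent of the channel allocation rule; likewise, since every actively served user completes normally with the same probability $P_f$ regardless of which user gets a channel, the total rate of normal completions, and hence $N_c(N_{th},\theta)$, is unaffected by the allocation choice. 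The only term that the allocation rule can influence is $N_d(N_{th},\theta)$, the (normalized) expected number of forced terminations, which enters with a negative sign. Thus maximizing $\bar{G}(N_{th},\theta)$ is equivalent to minimizing $N_d(N_{th},\theta)$.

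Next I would invoke the key counting argument already used in the proof of Proposition \ref{pro:OptimalChannelAllocation}: over a long horizon, the total number of ``virtual channel'' busy slots (delay slots) accumulated by the $N_{th}$ users is pinned down by the aggregate channel availability, which is exogenous and identical under any allocation rule. Since the largest-delay-first rule always serves the users closest to the deadline $D_{\max}$, it maximizes the expected delay $D_c$ of a normally-completed session, and by the same conservation identity \eqref{equ:EquationUsedResource}--\eqref{equ:EquationAvailableChannelsNum} this forces $\Omega_q$ — equivalently the forced-termination rate — to be no larger than under any benchmark allocation. The only adaptation needed here is to note that the ``segment'' decomposition in Definition \ref{def:AveRev} is just a particular way of bookkeeping the same long-run averages: summing $\bar{g}(N_{th},\theta)$ over all segments and normalizing recovers the long-run average revenue, so the inequality $N_d \le N_d^{(g)}$ established segment-wise is the same inequality $\Omega_q \le \Omega_q^{(g)}$ established in Proposition \ref{pro:OptimalChannelAllocation}.

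Putting these together: for any benchmark channel allocation policy $(g)$ we have $N_c(N_{th},\theta) = N_c^{(g)}(N_{th},\theta)$ and $N_d(N_{th},\theta) \le N_d^{(g)}(N_{th},\theta)$, hence
\begin{equation}
\bar{G}(N_{th},\theta) = N_c R_c - N_d C_q + N_{th}R_t \ge N_c^{(g)} R_c - N_d^{(g)} C_q + N_{th}R_t = \bar{G}^{(g)}(N_{th},\theta),
\end{equation}
so the largest-delay-first rule attains the maximum. The main obstacle I anticipate is the bookkeeping step of showing that the segment-based averages $N_c$ and $N_d$ of Definition \ref{def:AveRev} coincide with the slot-based long-run averages $\Omega_c$ and $\Omega_q$ of Proposition \ref{pro:OptimalChannelAllocation} — i.e., that conditioning on ``a departure occurs in slot $k+\delta$'' and dividing by $\delta+1$ does not bias the comparison between the two policies. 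This is where one must be a little careful, because $\delta$ itself is a random variable whose distribution depends on the allocation rule; the cleanest way around it is to not compute $N_c$, $N_d$ directly but instead argue at the level of the overall time-average (which the segment average reproduces by the renewal-reward structure), exactly as in the proof of Proposition \ref{pro:OptimalChannelAllocation}, and thereby inherit its conclusion verbatim.
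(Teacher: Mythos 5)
Your proposal is correct and takes essentially the same route as the paper: the paper gives no separate proof of Proposition \ref{pro:G}, stating it simply as a corollary of Proposition \ref{pro:OptimalChannelAllocation}, which is exactly the reduction you carry out. Your added care about reconciling the segment-based quantities $N_c$, $N_d$ of Definition \ref{def:AveRev} with the long-run averages $\Omega_c$, $\Omega_q$ via the renewal-reward structure fills in the bookkeeping the paper leaves implicit, rather than departing from its argument.
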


Based on Proposition \ref{pro:G}, we will still use the
largest-delay-first strategy channel allocation. The key remaining
issue is how to improve the admission control policy. Next we
characterize the properties of the largest-delay-first channel
allocation policy (the expected average revenue
$\bar{G}(N_{th},\theta)$ in the heuristic control policy) in several
lemmas, which enable us to design a better heuristic algorithm for
the admission control part.

According to the lemmas given in Appendix \ref{App:InterLemmas}, we
can characterize $\bar{G}(N_{th},\theta)$ as follows.
\begin{proposition}\label{Lemma:G}
$\bar{G}(N_{th},\theta)$ is a concave function of $N_{th}$.
\end{proposition}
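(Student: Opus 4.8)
I would prove concavity of $\bar{G}(N_{th},\theta)$ in $N_{th}$ by establishing that its first difference $\Delta(N_{th}) \triangleq \bar{G}(N_{th}+1,\theta) - \bar{G}(N_{th},\theta)$ is nonincreasing in $N_{th}$. Using the decomposition \eqref{equ:G_bar}, $\bar{G}(N_{th},\theta) = N_c(N_{th},\theta)R_c - N_d(N_{th},\theta)C_q + N_{th}R_t$, the linear term $N_{th}R_t$ contributes a constant to $\Delta(N_{th})$ and hence does not affect concavity. So the whole problem reduces to showing that $N_c(N_{th},\theta)$ is concave in $N_{th}$ and that $N_d(N_{th},\theta)$ is convex in $N_{th}$ — both of which should be consequences of the structural lemmas promised in Appendix~\ref{App:InterLemmas} (which I am entitled to invoke). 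The intuition I would spell out: adding one more admitted user to a system already holding $N_{th}$ users contends for the same pool of at most $J$ idle channels; because channels are allocated largest-delay-first (optimally, by Proposition~\ref{pro:G}), each additional admitted user exerts a diminishing marginal benefit on the normal-completion rate and an increasing marginal cost on the forced-termination rate — the classic "congestion" shape.

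The key steps, in order. First, I would fix $\theta$ and a channel-condition sample path and argue a pathwise monotonicity/coupling statement: running the heuristic policy with threshold $N_{th}$ versus $N_{th}+1$ on the \emph{same} realization of channel states and completion events, the state with $N_{th}+1$ users is always "worse" (in the sense of Definition~\ref{def:UserState}, extended to differing user counts) than the state with $N_{th}$ users — i.e., the extra user only ever pushes others to larger accumulated delays. Second, I would translate this coupling into the inequalities $N_c(N_{th}+1,\theta) - N_c(N_{th},\theta)$ being nonincreasing and $N_d(N_{th}+1,\theta) - N_d(N_{th},\theta)$ being nondecreasing as $N_{th}$ grows; this is where the Appendix lemmas do the heavy lifting, presumably by comparing the segment-length distribution $\delta$ and the per-segment leave counts $n_c, n_d$ across consecutive thresholds. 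Third, I would assemble: $\Delta(N_{th}) = R_c\big(N_c(N_{th}+1,\theta)-N_c(N_{th},\theta)\big) - C_q\big(N_d(N_{th}+1,\theta)-N_d(N_{th},\theta)\big) + R_t$, and observe that the first bracket is nonincreasing, the negative of the second bracket is nonincreasing, and $R_t$ is constant, so $\Delta(N_{th})$ is nonincreasing — which is exactly concavity of $\bar{G}(N_{th},\theta)$.

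The main obstacle. The delicate point is the second step: making rigorous the claim that the marginal effect of one extra admitted user on $N_c$ (respectively $N_d$) is monotone in the current threshold. The normalization by $\delta+1$ inside the expectations $N_c = E\{n_c/(\delta+1)\}$ and $N_d = E\{n_d/(\delta+1)\}$ couples the numerator and the (random) segment length in a way that does not factor cleanly, so a naive coupling of $n_c$ and $n_d$ alone is not enough — I would need the segment length $\delta$ itself to behave monotonically under the coupling (e.g., a larger threshold stochastically shortens segments because more users means someone leaves sooner, while simultaneously the forced-termination share grows). Handling this joint behavior is precisely what the Appendix lemmas are presumably designed for, so in the proof body I would state the needed monotonicity facts as consequences of those lemmas and then do the short arithmetic assembly above; if the lemmas give only one direction, I would note that the remaining direction follows from the conservation-of-channel-use identity analogous to \eqref{equ:EquationAvailableChannelsNum} in the proof of Proposition~\ref{pro:OptimalChannelAllocation}.
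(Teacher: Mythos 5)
Your proposal is correct and follows essentially the same route as the paper: both reduce concavity of $\bar{G}(N_{th},\theta)$ via the decomposition \eqref{equ:G_bar} to the concavity of $N_c(N_{th},\theta)$ and the convexity of $N_d(N_{th},\theta)$, which are exactly Lemmas \ref{Lemma:Pc} and \ref{Lemma:Pd} in Appendix \ref{App:InterLemmas}, with the linear term $N_{th}R_t$ contributing nothing to curvature. The only (cosmetic) difference is that you argue with first differences, which is arguably cleaner for the discrete variable $N_{th}$, whereas the paper writes the same reduction in terms of second derivatives.
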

\begin{proof}
The second order derivative of $\bar{G}(N_{th},\theta)$ in
terms of $N_{th}$ is
\begin{equation}\label{equ:Derivative2}
\bar{G}^{\prime\prime}(N_{th},\theta)=N_c^{\prime\prime}(N_{th},\theta)R_c-N_d^{\prime\prime}(N_{th},\theta)C_q,
\end{equation}
where $N_c^{\prime\prime}(N_{th},\theta)<0$ and
$N_d^{\prime\prime}(N_{th},\theta)>0$ based on Lemma \ref{Lemma:Pc}
and Lemma \ref{Lemma:Pd} in Appendix \ref{App:InterLemmas}. Thus we
have $\bar{G}^{\prime\prime}(N_{th},\theta)<0$, \ie
$\bar{G}(N_{th},\theta)$ is a concave function of $N_{th}$.
\end{proof}

\begin{figure}
\centering
\includegraphics[width=1.0\columnwidth]{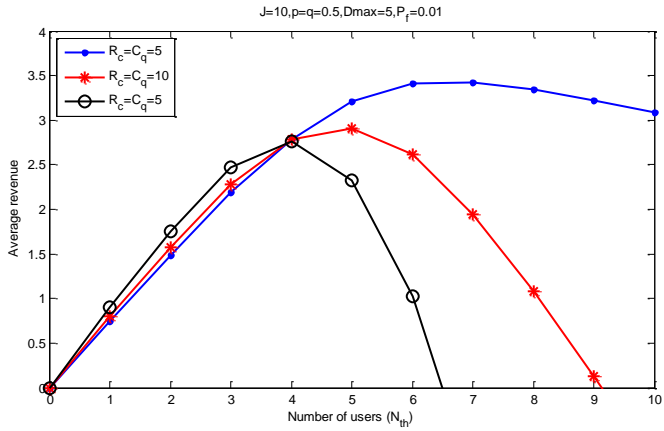}\\
\caption{The values of $\bar{G}(N_{th},\theta)$ versus $N_{th}$
corresponding to different values of $R_c$ and $C_q$ when $J=10$,
$p=0.5$, $q=0.5$, $D_{max}=5$, $P_f=0.01$, $R_t=0.7$ and
$\theta=\left\{m,[0,0,0,0,0,0]\right\}$.}\label{fig:GRcCq}
\end{figure}

Figure \ref{fig:GRcCq} plots $\bar{G}(N_{th},\theta)$ versus
$N_{th}$ with fixed $\theta=\left\{m,[0,0,0,0,0,0]\right\}$ and
different values of $R_c$ and $C_q$.

Now we are ready to discuss the heuristic admission control policy.
Given a state $\theta=(m,\boldsymbol{\omega_e})$, the admission
control decision can be either \emph{maintaining} or
\emph{searching}, depending on the relationship between the channel
state component $m$ and user state component
$\boldsymbol{\omega_e}$. More precisely, if $m$ is $\boldsymbol{B}$
(Bad) for $\boldsymbol{\omega_e}$, the network coordinator will
maintain the current user population and do not admit any new user
(\ie maintaining). This is because the network resource is not
enough to support the current users, and admitting new users will
make the situation worse. If $m$ is $\boldsymbol{G}$ (Good) for
$\boldsymbol{\omega_e}$, the network coordinator first searches for
the value of $N_{th}^{\ast}$ that achieves
$\max_{N_{th}}\bar{G}(N_{th},\theta)$ (\ie searching), and then
admits the number of users equal to the difference between
$N_{th}^{\ast}$ and the current users in the network. Proposition
\ref{Lemma:G} shows that $\bar{G}(N_{th},\theta)$ has a unique
maximizer $N_{th}^{\ast}$ (with a fixed state $\theta$), and implies
a simple stopping rule for the numerical search. If we have
$\bar{G}(N_{th}^{\prime}-1,\theta)\leq
\bar{G}(N_{th}^{\prime},\theta)$ and
$\bar{G}(N_{th}^{\prime},\theta) \geq
\bar{G}(N_{th}^{\prime}+1,\theta)$, then
$N_{th}^{\ast}=N_{th}^{\prime}$.

The heuristic admission control introduced above is a rollout
control policy based on the theory in Appendix \ref{Sec:DP.rollout}.
More specifically, the value of
$\max_{N_{th}}\bar{G}(N_{th},\theta)$ computed in the searching step
is the \emph{cost-to-go} starting from a state $\theta$. As
Proposition \ref{Lemma:G} shows that this is a concave maximization
problem, we can use several well-known numerical methods to achieve
this. One possibility is the gradient decent method, which has a
linear convergence rate as shown in \cite{Boyd2004.Convex}. More
precisely, the maximum number of convergence of the gradient decent
method is proportional to $\log{(\bar{G}(N_{th}^{initial},\theta) -
\bar{G}(N_{th}^{optimal},\theta))}/{\epsilon}$, where $\epsilon$ is
the stopping criterion. Since the precise value of
$\bar{G}(N_{th},\theta)$ is hard to compute with a low complexity,
we will use an approximation $\tilde{G}(N_{th},\theta)$ instead  in
the searching step. In this paper, we use an on-line computation
(simulation) to get $\tilde{G}(N_{th},\theta)$. Mover specifically,
for each choice of $(N_{th},\theta)$, we can obtain the value of
$\bar{g}((N_{th},\theta))$ as in (\ref{equ:InstantRevenue}) for each
particular simulation, and take the average over many simulations to
obtain an approximation $\tilde{G}(N_{th},\theta)$. The memory
requirements are proportional to the expected length of the segments
separated by the time slots in which there is at least one user
leaving the system (normal completion or forced termination)


\subsection{Revenue Boundary}

In this subsection, we will compare the performance of two heuristic
policies that we have proposed. Before that, we will establish an
upper-bound of the revenue achievable under any control policy
(heuristic or optimal). We call the bound the \emph{revenue
boundary}.

We first prove the following property of the expected average
revenue $\bar{G}(N_{th},\theta)$.
\begin{proposition}\label{Lemma:G2}
For a fixed number of users $N_{th}$, if there are two states
$\theta_1=\left\{m,\boldsymbol{\omega_e^{(1)}}\right\}$ and
$\theta_2=\left\{m,\boldsymbol{\omega_e^{(2)}}\right\}$ such that
$\boldsymbol{\omega_e^{(1)}} \gtrdot \boldsymbol{\omega_e^{(2)}}$,
we have $\bar{G}(N_{th},\theta_1)>\bar{G}(N_{th},\theta_2)$.
\end{proposition}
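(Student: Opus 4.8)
The plan is to exploit the definition of $\bar{G}(N_{th},\theta)$ from Definition \ref{def:AveRev} together with the structural meaning of the relation $\gtrdot$ from Definition \ref{def:UserState}. Recall $\bar{G}(N_{th},\theta)=N_c(N_{th},\theta)R_c-N_d(N_{th},\theta)C_q+N_{th}R_t$, where the $N_{th}R_t$ term is identical in $\theta_1$ and $\theta_2$. Hence it suffices to show $N_c(N_{th},\theta_1)\ge N_c(N_{th},\theta_2)$ and $N_d(N_{th},\theta_1)\le N_d(N_{th},\theta_2)$, with at least one inequality strict so that the net effect on $R_c\ge 0,\ C_q\ge 0$ gives the strict bound. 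The key observation is that $\boldsymbol{\omega_e^{(1)}}\gtrdot\boldsymbol{\omega_e^{(2)}}$ means every user in state $\theta_2$ has an accumulated delay that is at least as large as the corresponding user in state $\theta_1$ (same number of users, obtained from $\theta_1$ by one step under some channel condition without admission); therefore $\theta_2$ is ``more urgent'' than $\theta_1$.

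First I would set up a coupling argument: run the largest-delay-first policy (which is the channel allocation used in defining $\bar{G}$, and is optimal for it by Proposition \ref{pro:G}) simultaneously from $\theta_1$ and from $\theta_2$, driven by the same sequence of channel realizations and the same per-user completion coin flips. Since the two states have the same user count and the users in $\theta_1$ are uniformly ``ahead'' (smaller or equal delays), I would argue inductively that at every slot before the first departure, the $\theta_1$ process dominates the $\theta_2$ process in the sense of $\gtrdot$; in particular the $\theta_2$ process reaches the delay bound $D_{\max}$ (forced termination) no later than the $\theta_1$ process, while normal completions happen at the same slots under the coupling. This yields $\delta_1\ge\delta_2$ pathwise (the $\theta_1$ segment lasts at least as long), $n_d(N_{th},\theta_1)\le n_d(N_{th},\theta_2)$, and, because a longer segment with the same completion statistics means completions are spread over more slots, the normalized quantities satisfy $E\{n_c/(\delta+1)\}$ being no smaller for $\theta_1$ — more carefully, I would track that a forced termination in $\theta_2$ that does not (yet) occur in $\theta_1$ strictly reduces $n_d/(\delta+1)$ for $\theta_1$ relative to $\theta_2$ on that sample path, which has positive probability by the same accessibility/irreducibility argument used in Section \ref{Sec:sub.StateSpace}. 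Combining, $N_c(N_{th},\theta_1)\ge N_c(N_{th},\theta_2)$ and $N_d(N_{th},\theta_1)<N_d(N_{th},\theta_2)$ (strict because the two user states are distinct, so on a positive-probability set the extra urgency of $\theta_2$ actually triggers an earlier forced termination), which gives $\bar{G}(N_{th},\theta_1)>\bar{G}(N_{th},\theta_2)$.

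The main obstacle I anticipate is making the pathwise domination under largest-delay-first rigorous, i.e.\ proving that the $\gtrdot$ order is preserved slot-by-slot by the coupled dynamics. Largest-delay-first is exactly the allocation that ``pushes down'' the most urgent users, so intuitively it maintains domination, but one has to handle the bookkeeping of which users get channels when $m$ is Good versus Bad, and the fact that $\boldsymbol{\omega_e^{(1)}}\gtrdot\boldsymbol{\omega_e^{(2)}}$ is stated for a single-step transition rather than as a general partial order — I would either invoke (or first establish) that $\gtrdot$ is preserved under a common channel realization when both states allocate by largest-delay-first, reusing the virtual-channel / resource-accounting idea from the proof of Proposition \ref{pro:OptimalChannelAllocation}. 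The secondary subtlety is the strictness: I must exhibit a positive-probability sample path on which $\theta_2$ suffers a forced termination strictly before $\theta_1$ does, which follows because the states are distinct (so some user in $\theta_2$ is strictly more delayed) and, with positive probability, that user receives no channel long enough to hit $D_{\max}$ while its counterpart in $\theta_1$ does not.
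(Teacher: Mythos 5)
Your proposal is correct and follows essentially the same route as the paper: the paper proves this proposition by invoking Lemma~\ref{Lemma:Pd2}, whose proof is precisely the coupling you describe (run both systems under the same channel realizations and the same policy, so every departure in the $\theta_1$ system is matched by a departure no later in the $\theta_2$ system), yielding $N_c(N_{th},\theta_1)>N_c(N_{th},\theta_2)$ and $N_d(N_{th},\theta_1)<N_d(N_{th},\theta_2)$, which are then substituted into \eqref{equ:G_bar}. Your version simply makes the slot-by-slot domination induction and the strictness argument more explicit than the paper does.
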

\begin{proof}
According to Lemma \ref{Lemma:Pd2} in Appendix
\ref{App:InterLemmas}, we have
$N_c(N_{th},\theta_1)>N_c(N_{th},\theta_2)$ and
$N_d(N_{th},\theta_1)<N_d(N_{th},\theta_2)$. By substituting them
into (\ref{equ:G_bar}), we get
$\bar{G}(N_{th},\theta_1)>\bar{G}(N_{th},\theta_2)$.
\end{proof}

Then we can characterize the revenue boundary.
\begin{proposition}\label{pro:Boundary}
Consider a network state
$\bar{\theta}=\left\{m,[0,0,0,\cdots]\right\}$, where there are $m$
available channels. The maximum expected revenue per time slot
achieved by any policy, denoted by $G_{max}$, satisfies
$G_{max}<\max_{m,N_{th}}\left\{\bar{G}(N_{th},\bar{\theta})\right\},$
where $N_{th}$ is an admission control threshold.
\end{proposition}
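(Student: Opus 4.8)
The plan is to show that the idealized ``segment revenue'' $\bar{G}(N_{th},\bar\theta)$ evaluated at the best channel state and best threshold serves as a strict upper bound on the true maximum revenue $G_{max}$. The key conceptual point is that $\bar{G}(N_{th},\bar\theta)$ is computed under two artificially favorable assumptions that no real policy can enjoy simultaneously at all times: first, that the network starts every segment from the ``cleanest'' user state $\bar\theta=\{m,[0,0,\ldots]\}$ in which every admitted user has zero accumulated delay; and second, that the channel state $m$ appearing in $\bar\theta$ is whatever value maximizes the expression, i.e., the channels are always in the most favorable configuration. A genuine policy operating over the infinite horizon cannot reset the user state to all-zeros at will, nor can it dictate the primary users' channel occupancy, so its achievable average revenue must be strictly dominated.

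First I would invoke the decomposition already set up in Section \ref{Sec:DP.optimal}: under any policy (in particular the optimal one), divide the infinite horizon into consecutive segments, where a new segment begins each time at least one user leaves the system. By ergodicity and Proposition \ref{pro:InitialStateInpendent}, the long-run average revenue equals a weighted average (over the stationary distribution of segment-starting states) of the per-segment average revenues. Next, for each segment starting in some state $\theta=\{m',\boldsymbol{\omega_e}\}$ with $N'$ users currently in the network, I would argue that its per-segment average revenue is at most $\bar{G}(N',\theta)$ — because within a segment no user leaves until the final slot, which is exactly the regime under which $\bar{G}$ is defined (Definition \ref{def:AveRev}), and the largest-delay-first allocation achieving $\bar{G}$ is optimal for that regime by Proposition \ref{pro:G}. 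Then apply Proposition \ref{Lemma:G2}: since $\bar\theta=\{m',[0,0,\ldots,0]\}$ satisfies $[0,0,\ldots] \gtrdot \boldsymbol{\omega_e}$ for any $\boldsymbol{\omega_e}$ with the same number of users (one can always reach $\boldsymbol{\omega_e}$ from the all-zero state under a suitable channel condition without admitting anyone), we get $\bar{G}(N',\{m',[0,\ldots]\}) \geq \bar{G}(N',\theta)$. Finally, maximizing over $m'$ and over the user count $N'$ (relabeled $N_{th}$) gives $\bar{G}(N',\theta) \le \max_{m,N_{th}}\bar{G}(N_{th},\bar\theta)$ for every segment, and averaging preserves the bound, yielding $G_{max} \le \max_{m,N_{th}}\bar{G}(N_{th},\bar\theta)$.

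To get the \emph{strict} inequality, I would observe that the bound cannot be tight: achieving equality would require that every segment start from an all-zero user state and from the revenue-maximizing channel state $m$, with the optimal threshold in effect throughout. But whenever a segment ends with only some (not all) users leaving, the next segment inherits users with strictly positive accumulated delay, so Proposition \ref{Lemma:G2} gives a strict drop for that segment; and such events occur with positive probability in the stationary regime (indeed the completion probability $P_f$ and the channel dynamics guarantee segments of length $>1$ and partial departures occur infinitely often). Hence the weighted average is strictly below the maximum, giving $G_{max} < \max_{m,N_{th}}\bar{G}(N_{th},\bar\theta)$.

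The main obstacle I anticipate is making the per-segment bound fully rigorous — in particular, justifying that the true optimal policy's behavior \emph{within} a segment is really constrained to the ``no departures until the end, no admissions except at the start'' regime that defines $\bar{G}$. The optimal policy might admit users mid-segment, which is not captured by $\bar{G}(N',\theta)$ directly; handling this requires noting that admitting mid-segment only increases the user count (weakly worsening future delay exposure) and that one can bound the contribution of any such segment by $\bar{G}$ evaluated at the appropriate (larger) user count and cleanest state, still dominated by the overall maximum over $N_{th}$. Tying the segment decomposition of an \emph{arbitrary} policy cleanly to the $\bar{G}$ quantities — which were defined specifically for threshold-based, largest-delay-first policies — is the delicate bookkeeping step; everything else follows from Propositions \ref{pro:G}, \ref{Lemma:G}, and \ref{Lemma:G2} plus the ergodic averaging already established.
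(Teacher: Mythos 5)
Your proposal shares the paper's key ingredient but not its architecture. The step you and the paper have in common is the use of Proposition~\ref{Lemma:G2} (via Definition~\ref{def:UserState}) to argue that, among states with $m$ channels and a fixed user count, the one with every user at zero accumulated delay maximizes $\bar{G}$, followed by maximizing over $m$ and $N_{th}$. The paper then closes the argument through the rollout machinery of Section~\ref{Sec:DP.optimal}: the quantity being bounded is the cost-to-go computed in the searching step, the identity $\bar{G}(N_{th},\bar{\theta})=\bar{G}(N_{th},\hat{\theta})$ with $\hat{\theta}=\{m,[N_{th},0,\ldots]\}$ bridges the empty state to the zero-delay state (note that Definition~\ref{def:UserState} only compares states with \emph{equal} user counts, so your direct assertion $[0,0,\ldots]\gtrdot\boldsymbol{\omega_e}$ needs this same bridge), and any policy, including the optimal one, is then viewed as a rollout policy with itself as base so that its per-slot revenue is capped by $\max_{m,N_{th}}\bar{G}(N_{th},\bar{\theta})$. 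You instead decompose an arbitrary policy's trajectory into departure-delimited segments and try to bound each segment's average revenue, which is a genuinely different (and, if completed, more self-contained) route.

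The genuine gap is precisely the step you flag at the end and do not close: for an arbitrary policy, a segment is \emph{not} of the regime under which $\bar{G}(N_{th},\theta)$ is defined. The policy may admit users in interior slots, so the population is time-varying and admission can be adaptive to the realized channel states; neither Proposition~\ref{pro:G} (which only optimizes the channel allocation under a fixed threshold) nor the concavity in Proposition~\ref{Lemma:G} delivers the claim that such a segment's per-slot revenue is at most $\max_{N_{th}}\bar{G}(N_{th},\cdot)$. Your fallback, ``bound the segment by $\bar{G}$ at the larger user count and the cleanest state,'' does not work as stated because $\bar{G}$ is not monotone in $N_{th}$ (that is the whole point of admission control), and an adaptive scheme that adds users only when channels look favorable is not covered by any comparison lemma in the paper. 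Without this per-segment domination lemma, the length-weighted averaging over segments does not yield the stated bound, so as written the proof is incomplete at its load-bearing step. (To be fair, the paper's own closing assertion---that any policy is a special case of a rollout policy and hence bounded by the searched cost-to-go---glosses over essentially the same difficulty; but a self-contained proof along your lines must actually supply that domination argument, and your strictness claim inherits the same dependence on it.)
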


\begin{proof}
Assume $\hat{\theta}=\left\{m,
\boldsymbol{\omega_e^{\theta}}\right\}$ and $\eta = \left\{m,
\boldsymbol{\omega_e^{\eta}}\right\}$ are two network states with
$m$ available channels and $N_{th}$ users, where
$\boldsymbol{\omega_e^{\theta}}=[N_{th},0,0,\cdots]$. If
$\boldsymbol{\omega_e^{\eta}} \neq \boldsymbol{\omega_e^{\theta}}$,
we have $\boldsymbol{\omega_e^{\theta}} \gtrdot
\boldsymbol{\omega_e^{\eta}}$. From Proposition \ref{Lemma:G2}, we
get $\bar{G}(N_{th},\hat{\theta})>\bar{G}(N_{th},\eta)$. In
addition, after the control decision in the first time slot,
$N_{th}$ new secondary users are admitted in the case of
$\overline{\theta}$ (since there are originally no users in the
system), and no new secondary user is admitted in the case of
$\hat{\theta}$ (since there are already $N_{th}$ users with zero
accumulative delay in the system). Thus, after the first time slot,
we achieve the same state in both cases. In the following time
slots, the expected changes of the two cases are thus the same.
Therefore, according to the definition of $\overline{G}$ in
Definition \ref{def:AveRev}, we have $\bar{G}(N_{th},\bar{\theta}) =
\bar{G}(N_{th},\hat{\theta})$. Therefore, the cost-to-go we compute
in the search step is never larger than
$\max_{m,N_{th}}\left\{\bar{G}(N_{th},\bar{\theta})\right\}$. As the
optimal policy can be viewed as a special case of the rollout policy
by using the optimal policy as the \emph{base policy}, it follows
that the expected revenue per time slot of any policy (including the
optimal one) is less than
$\max_{m,N_{th}}\left\{\bar{G}(N_{th},\bar{\theta})\right\}$. %
\end{proof}

\begin{figure}
\centering
\includegraphics[width=1.0\columnwidth]{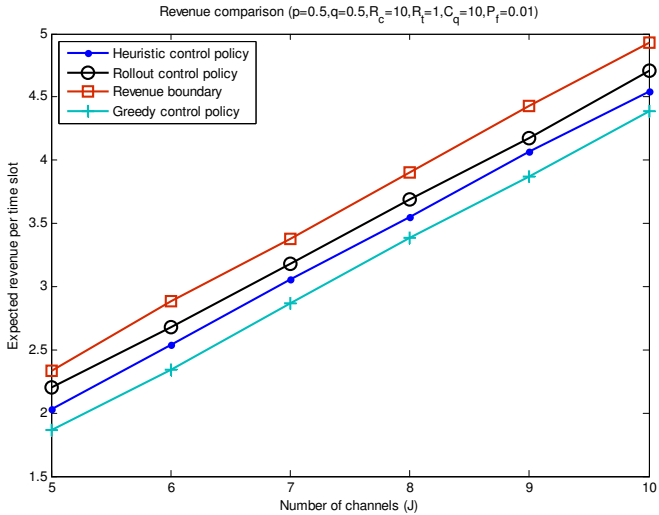}\\
\caption{Expected revenue comparison between the greedy control
policy, the heuristic control policy, the rollout control policy,
and the revenue boundary when $D_{max}=5$, $p=0.5$, $q=0.5$,
$R_c=10$, $R_t=1$, $C_q=10$, $P_f=0.01$ and $J \in
[5,10]$.}\label{fig:RComp}
\end{figure}

{In Section~\ref{Sec:SystemModel}, we have assumed perfect spectrum
sensing.  Under this assumption, the control policy of the
throughput maximization problem studied in
\cite{ZhouLiLiWaSo2010,UrN09} can be simplified into admitting
secondary users to make full use of the available channels in each
time slot, which we call \emph{greedy admission control} in this
paper. Such greedy admission control policy will admit new users
whenever possible such that the total active users in a time slot
equals to the number of available channels. Comparing with our
proposed policy, this greedy policy is more aggressive and does not
consider channel availabilities in the future, and thus will lead to
a larger number of forced dropped users. We have plotted the
expected revenue of the greedy admission control in
Fig.~\ref{fig:RComp}, with the comparison with our proposed
admission control and the revenue boundary. We can see that even the
performance of our proposed heuristic control policy is better than
that of the greedy control policy. The heuristic control policy
(with the threshold-based admission control) is simple but
effective, while the rollout algorithm achieves a slightly better
performance but with a much higher computational complexity.} The
actual performance gap between the proposed algorithms and the
optimal policy could be even smaller, as the revenue boundary in
Proposition \ref{pro:Boundary} may not be very tight.


\section{Conclusions}\label{Sec:Conclusion}

Supporting QoS over cognitive radio networks is very challenging,
mainly due to the uncertainty of available communication resources.
As one further step towards understanding this under-explored yet
practically important research area, we considered supporting delay
sensitive traffic in cognitive radio networks. The key is to jointly
optimize admission control and channel  allocation, in order to
balance the number of concurrent sessions and the QoS of each
session. We formulated the problem as an infinite-horizon Markov
decision process problem, and proved that the optimal average
revenue is independent of the initial system state. Then we
transformed the original problem into a stochastic shortest path
problem, and proved that the Bellman's equation converged to the
optimal policy. Furthermore, we proposed a heuristic control policy
and proved that the largest-delay-first strategy is optimal given
threshold-based admission control. We further proposed a rollout
algorithm that improves upon the heuristic algorithm by doing
dynamic admission control. By comparing with a revenue bound, we
show that both of our proposed algorithms achieve close-to-optimal
performance.


\appendix

\subsection{Rollout Algorithm}\label{Sec:DP.rollout}

For convenience, we consider the finite-horizon stochastic shortest
path problem as a  discrete-time dynamic system
\begin{equation}\label{equ:DynamicProgramming}
x_{k+1}=f(x_k,u(x_k),\zeta_k),~k=0,1,\cdots.
\end{equation}
According to definitions in Section \ref{Sect:ProblemFormualtion},
$x_k$ is the state (belonging to the state space $\mathcal{S}$) at
time slot $k$, $u(x_k)$ is the control selected from the control
space $\mathcal{U}$ at time $k$, $\zeta_k$ is a random disturbance
caused by the activities of the users at time $k$, and $f$ is the
state transition function. We focus on an $N$-stage horizon problem
with a terminal cost $g(x_N)$ that depends on the terminal state
$x_N$. We define the \emph{cost-to-go} of a policy
$\boldsymbol{\mu}=\{u(x_0),u(x_1),\cdots,u(x_{N-1})\}$ starting from
a state $x_k$ at time slot $k$ as
\begin{equation}\label{equ:cost2go}
J_k^{\boldsymbol{\mu}}(x_k)=E\left\{A^\ast-g(x_N)+
\sum_{i=k}^{N-1}{\{A^\ast-g(x_i,u(x_i))\}}\right\}.
\end{equation}
The optimal cost-to-go starting from a state $x_k$ in time slot $k$
is $J_k(x_k)=\inf_{\boldsymbol{\mu}}{J_k^{\boldsymbol{\mu}}(x_k)}$,
and it satisfies the following recursive relationship
\begin{equation}\label{equ:DPRecursion}
\begin{split}
J_k(x_k) \!\! = \!\!\inf_{\mu_k \in
\mathcal{U}}{E\left\{A^\ast\!\!-\!\!g(x_k,u(x_k))\!\!+\!\!
J_{k+1}(f(x_k,u(x_k),w_k))\right\}},
\end{split}
\end{equation}
with $k=0,1,\cdots,N-1$ and the initial condition is
$J_N(x_N)=A^\ast-g(x_N)$. We can also extend the definitions to
infinite-horizon problems with minor modifications.

An optimal policy could be obtained by calculating the optimal
cost-to-go functions $J_k$. But it is prohibitively time-consuming
for our problem. To reduce the computation complexity, we can adopt
the rollout algorithm by replacing the optimal cost-to-go function
$J_{k+1}$ in (\ref{equ:DPRecursion}) with an approximation
$\tilde{J}_{k+1}$.

In the rollout algorithm, some known heuristic or suboptimal policy
$\boldsymbol{\mu}$, called the \emph{base policy}, will be used to
calculate the approximating function $\tilde{J}_{k+1}$.  The values
of the approximate cost-to-go $\tilde{J}_{k+1}$ may be computed in a
number of ways: by a closed-form expression, by an approximate
off-line computation, or by an on-line computation.  The improved
policy is called the \emph{rollout policy} based on
$\boldsymbol{\mu}$. It is a one-step lookahead policy (by using
(\ref{equ:DPRecursion}) once), where we approximate the optimal
cost-to-go on the right hand side of (\ref{equ:DPRecursion}) by the
cost-to-go of the base policy. The more detailed description of the
rollout algorithm can be found in references \cite{Ber05-b,Ber07-b}.

\subsection{Several Lemmas for Proving Propositions \ref{Lemma:G} and \ref{Lemma:G2}}\label{App:InterLemmas}

After defining the expected revenue $\bar{g}(N_{th},\theta)$ and the
expected average revenue $\bar{G}(N_{th},\theta)$ in Definition
\ref{def:Expectation}, we give the following intermediate lemmas to
help to illustrate the properties of $\bar{G}(N_{th},\theta)$ in
terms of the first and second order derivatives.

\begin{lemma}\label{Lemma:Pc}
For a fixed state $\theta$, $N_c(N_{th},\theta)$ is a non-decreasing
and concave function of the number of users $N_{th}$.
\end{lemma}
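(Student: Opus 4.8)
The plan is to interpret $N_c(N_{th},\theta)$ operationally in terms of the largest-delay-first dynamics and argue monotonicity and concavity by a coupling/sample-path comparison across different threshold values. Recall that $N_c(N_{th},\theta)=E\{n_c(N_{th},\theta)/(\delta+1)\}$ is the (per-slot normalized) expected number of users who complete normally during the segment that begins in state $\theta$ with $N_{th}$ admitted users and ends the first time any user leaves. Since the channel process $\{f_s(m)\}$ is exogenous and unaffected by control, I would fix a common realization of the channel availabilities and of the per-user completion coins, and couple the runs corresponding to thresholds $N_{th}$, $N_{th}+1$, and $N_{th}+2$ on that common probability space; all three runs see the same sequence of ``Good/Bad'' slots in the sense of Definition \ref{def:ChannelState}, up to the effect of one extra user on whether a given $m$ is Good or Bad.

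First I would establish monotonicity: adding one more admitted user to the initial population can only (weakly) increase the number of users present who might complete, and under the largest-delay-first rule the extra user is served whenever a channel is spare, so on the coupled sample path $n_c$ is non-decreasing in $N_{th}$ while the stopping index $\delta$ is non-increasing (an extra user gives one more chance for a departure to occur earlier). Care is needed because $n_c$ and $\delta+1$ both change, so the ratio is not obviously monotone termwise; the cleaner route is to appeal to the already-proved structure in Proposition \ref{pro:OptimalChannelAllocation}/Proposition \ref{pro:G}, namely that $\bar G$ is computed under the optimal (largest-delay-first) allocation, and to decompose $N_c(N_{th},\theta)$ via a renewal-type identity: in steady state the long-run rate of normal completions equals $P_f$ times the expected number of actively served users per slot, which is non-decreasing in $N_{th}$ because more admitted users means more users served whenever channels permit. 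This gives $N_c'(N_{th},\theta)\ge 0$.

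For concavity I would compare the \emph{increments} $N_c(N_{th}+1,\theta)-N_c(N_{th},\theta)$ and $N_c(N_{th}+2,\theta)-N_c(N_{th}+1,\theta)$ on the coupled space. The key observation is diminishing returns: the marginal user admitted when the population is already $N_{th}+1$ is served (and hence can contribute a completion) only in slots where the number of available channels strictly exceeds $N_{th}+1$, a subset of the slots in which the marginal user admitted at population $N_{th}$ is served (channels exceeding $N_{th}$). Hence the expected extra served-user-slots contributed by each successive admitted user is non-increasing, and multiplying by the per-slot completion probability $P_f$ and normalizing by segment length preserves this, yielding $N_c''(N_{th},\theta)\le 0$.

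The main obstacle I anticipate is making the normalization by $\delta+1$ rigorous: $\delta$ is itself a random stopping time that depends on $N_{th}$, so the coupling argument must track how the segment length shrinks as more users are added and confirm that the ``served-user-slots'' accounting is consistent across runs of different lengths. I would handle this by working with the long-run per-slot averages directly (invoking the ergodicity already used in the proof of Proposition \ref{pro:OptimalChannelAllocation}) rather than with a single segment, so that $N_c(N_{th},\theta)$ becomes a genuine time-average of a per-slot quantity — the expected number of served users times $P_f$ — whose monotonicity and concavity in $N_{th}$ follow from the channel-availability distribution being fixed and the ``min of $N_{th}$ and available channels'' structure of largest-delay-first service being concave and non-decreasing in $N_{th}$ slot by slot.
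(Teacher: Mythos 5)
Your argument is essentially the paper's own proof: monotonicity because an extra admitted user can only add completion opportunities, and concavity because the marginal user at a larger population is served (and hence can complete, with the same per-slot probability $P_f$) only on a subset of the slots in which the marginal user at a smaller population is served, so the increments of $N_c(N_{th},\theta)$ are non-increasing. The only difference is presentational: you make the sample-path coupling and the treatment of the segment-length normalization by $\delta+1$ explicit (via long-run per-slot averages and ergodicity), whereas the paper's proof argues directly in terms of ``served users per time slot'' and leaves that normalization implicit.
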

\begin{proof}\label{proof:Lemma.Pc}
Recall that all the users have the same completion probability $P_f$
when they are actively served. Thus we have $N_c(N_{th}+1,\theta)
\geq N_c(N_{th},\theta)$, as having one more user means that it is
possible to actively serve one more user and thus have one more
normal session completion. Furthermore, we assume that under the
same channel condition and over a period of time slots, the
incremental number of served users per time slot is $\Delta_1$ when
the number of users changes from $N_{th}-1$ to $N_{th}$. Then
$\Delta_2$, the incremental number of served users per time slot
from $N_{th}$ to $N_{th}+1$, should be no bigger than $\Delta_1$.
This is because if $N_{th}+1$ users can be allocated available
channels, $N_{th}$ users could be allocated available channels in
the same time slot. Therefore, we have
$N_c(N_{th}+1,\theta)-N_c(N_{th},\theta) \leq
N_c(N_{th},\theta)-N_c(N_{th}-1,\theta),$ which means
$N_c(N_{th},\theta)$ is a non-decreasing and concave function of
$N_{th}$.
\end{proof}

\begin{lemma}\label{Lemma:Pd}
For a fixed state $\theta$, $N_d(N_{th},\theta)$ is a non-decreasing
and convex function of the number of users $N_{th}$.
\end{lemma}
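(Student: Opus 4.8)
The plan is to mirror the structure of the proof of Lemma \ref{Lemma:Pc}, but track the \emph{forced-termination} count $N_d(N_{th},\theta)$ instead of the normal-completion count, and to establish convexity rather than concavity. First I would argue monotonicity: adding one more user to a fixed state $\theta$ can only increase the competition for the limited available channels in each time slot, so the user with the largest accumulative delay is (weakly) more likely to exhaust the delay budget $D_{\max}$ before being served. Hence $N_d(N_{th}+1,\theta)\geq N_d(N_{th},\theta)$, i.e., $N_d$ is non-decreasing in $N_{th}$. This step is routine and parallels the corresponding argument for $N_c$.

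The substance is convexity. The key quantity to control is the per-time-slot \emph{shortfall} of channels, i.e.\ the number of admitted users that fail to get a channel in a given slot; under the largest-delay-first allocation this shortfall is exactly $\max\{0,\,N_{th}-m'\}$ where $m'$ is the number of available channels in that slot. I would observe that, holding the channel realization fixed, the map $N_{th}\mapsto\max\{0,\,N_{th}-m'\}$ is convex and non-decreasing in $N_{th}$; taking expectations over the i.i.d.\ ON/OFF channel process preserves convexity, so the expected number of delay-incrementing "misses" per slot is convex in $N_{th}$. The number of forced terminations is driven by these accumulated misses pushing users past $D_{\max}$: going from $N_{th}-1$ to $N_{th}$ users adds at most as many extra forced terminations as going from $N_{th}$ to $N_{th}+1$ does, because the extra congestion introduced by each additional admitted user is (weakly) increasing. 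Formally, I would show the discrete second difference is non-negative, $N_d(N_{th}+1,\theta)-N_d(N_{th},\theta)\geq N_d(N_{th},\theta)-N_d(N_{th}-1,\theta)$, which is precisely convexity, and combine it with the monotonicity established above.

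The main obstacle is making the "extra congestion is increasing" claim rigorous, since $N_d$ is a nonlinear functional (a threshold-crossing count at $D_{\max}$) of a random accumulation process rather than a simple linear average like the $\Delta_1\ge\Delta_2$ comparison used for $N_c$. The cleanest route is a coupling argument: run the systems with $N_{th}-1$, $N_{th}$, and $N_{th}+1$ admitted users on the \emph{same} channel-availability sample path and the same completion coin flips, and compare the induced delay vectors slot by slot, showing that the additional user in the larger system is always "no better off" and that each already-present user's delay is monotone in the population size. This sample-path dominance, together with the convexity of the channel-shortfall function, yields the second-difference inequality. If a full coupling is deemed too heavy, one can instead invoke the virtual-channel accounting from the proof of Proposition \ref{pro:OptimalChannelAllocation} to reduce the forced-termination count to a deterministic convex function of the occupancy process and then appeal to convexity preservation under expectation.
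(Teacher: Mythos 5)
Your monotonicity step is fine and matches the paper. The convexity step, however, has a genuine gap. Your two main tools are (i) convexity of the per-slot shortfall $\max\{0,\,N_{th}-m'\}$ in $N_{th}$ for a fixed channel realization, and (ii) a coupling giving first-order sample-path dominance (every user is weakly worse off in the larger system). Neither, nor their combination, delivers the second-difference inequality $N_d(N_{th}+1,\theta)-N_d(N_{th},\theta)\geq N_d(N_{th},\theta)-N_d(N_{th}-1,\theta)$. The quantity $N_d(N_{th},\theta)=E\{n_d/(\delta+1)\}$ is a threshold-crossing (and random-horizon ratio) functional of the coupled delay dynamics: forced terminations occur only when users at delay $D_{\max}$ fail to get channels, the segment length $\delta$ itself depends on $N_{th}$, and the map from ``misses per slot'' to ``forced terminations'' is nonlinear and order-dependent under largest-delay-first. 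Convexity of the per-slot shortfall is a statement about a different (linear, per-slot) functional, and expectation-preservation of convexity does not transfer it to $N_d$. Likewise, first-order dominance is a monotonicity statement; convexity is a second-order (increasing-differences) property, and your sentence ``the extra congestion introduced by each additional admitted user is (weakly) increasing'' is precisely the claim to be proved, not an argument for it. You flag this obstacle yourself, but the proposed fixes (the coupling, or the virtual-channel accounting from Proposition \ref{pro:OptimalChannelAllocation}, which is a long-run balance identity) do not supply the missing mechanism.

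The paper closes exactly this hole with a specific decomposition: it isolates the marginal user $U_{ser}$ and writes $\Delta_2=\Delta_2'+\Delta_2''$, where $\Delta_2'$ is the change in forced terminations among the \emph{other} users and $\Delta_2''$ is $U_{ser}$'s own contribution. It then argues $\Delta_2'\geq\Delta_1$ by a channel-degradation comparison (with $U_{ser}$ present, the remaining users face an effectively worse channel process, and the marginal damage of adding one more of them is larger when channels are scarcer), and $\Delta_2''\geq 0$ under largest-delay-first. If you want to salvage your coupling route, you would need to build this increasing-differences structure into the coupling itself — comparing the increment from $N_{th}-1$ to $N_{th}$ users in the original environment against the same increment in the environment degraded by the extra user, on a common sample path — rather than relying on convexity of the per-slot shortfall plus one-sided dominance.
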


{\begin{proof} \label{proof:Lemma.Pd} Having one more admitted user
means that a higher probability of a forced termination, \ie
$N_d(N_{th}+1,\theta) \geq N_d(N_{th},\theta)$. Under the
largest-delay-first channel allocation policy, define $\Delta_1 =
N_d(N_{th},\theta)-N_d(N_{th}-1,\theta)$ and the additional user as
$U_{ser}$, and $\Delta_2 = N_d(N_{th}+1,\theta)-N_d(N_{th},\theta)$.
For discussion convenience, we call the system with $N_{th}-1$ users
as \emph{Case 1}, the system with $N_{th}$ users as \emph{Case 2},
and the system with $N_{th}+1$ users as \emph{Case 3}. In Case 2, we
divide users into two parts: $U_{ser}$ and other $N_{th}-1$ users.
In Case 3, we also divide users into two parts: $U_{ser}$ and other
$N_{th}$ users. Then we define
$N_d(N_{th}+1,\theta)=N_d^\prime(N_{th},\theta)+N_d^3(U_{ser},\theta)$
and
$N_d(N_{th},\theta)=N_d^\prime(N_{th}-1,\theta)+N_d^2(U_{ser},\theta)$.
Here $N_d^3(U_{ser},\theta)$ and $N_d^\prime(N_{th},\theta)$
represent the corresponding parts of $N_d(N_{th}+1,\theta)$ caused
by the forced termination of $U_{ser}$ and other users in Case 3,
respectively; $N_d^2(U_{ser},\theta)$ and
$N_d^\prime(N_{th}-1,\theta)$ represent the corresponding parts of
$N_d(N_{th},\theta)$ caused by the forced termination of $U_{ser}$
and other users in Case 2, respectively. On this basis, we further
define $\Delta_2 = \Delta_2^\prime + \Delta_2^{\prime\prime},$ where
$\Delta_2^\prime =
N_d^\prime(N_{th},\theta)-N_d^\prime(N_{th}-1,\theta)$ and
$\Delta_2^{\prime\prime} =
N_d^3(U_{ser},\theta)-N_d^2(U_{ser},\theta).$

In Case 2 and Case 3, we now exclude the user $U_{ser}$ from the
system and assume the channels allocated to $U_{ser}$ are occupied
by primary users. Then we can have the above expression of
$\Delta_2^\prime$ to illustrate the effect of the increased user
$U_{ser}$ from $N_{th}-1$ to $N_{th}$. Comparing $\Delta_2^\prime =
N_d^\prime(N_{th},\theta)-N_d^\prime(N_{th}-1,\theta)$ with
$\Delta_1 = N_d(N_{th},\theta)-N_d(N_{th}-1)$, the difference is
that in any time slot (on any sample path), the channel state of
$\Delta_2^\prime $ is always no better than that of the $\Delta_1 $
case (as the extra user $U_{ser}$ may occupy an available channel).
Therefore, in terms of the expected number of users forced to leave
the system per time slot, the effect of the increased user to
$\Delta_2^\prime$  is larger than that to $\Delta_1 $. This leads to
$\Delta_2^\prime \geq \Delta_1$. Moreover, considering $U_{ser}$
from Case 2 to Case 3, we have $\Delta_2^{\prime\prime} \geq 0$
under the largest-delay-first policy. From the above analysis, we
get $\Delta_2\geq\Delta_1$, \ie
$N_d(N_{th}+1,\theta)-N_d(N_{th},\theta) \geq
N_d(N_{th},\theta)-N_d(N_{th}-1,\theta),$ which means
$N_d(N_{th},\theta)$ is a non-decreasing and convex function of
$N_{th}$ \cite{Fox66}.
\end{proof}}

\begin{lemma}\label{Lemma:Pd2}
For a fixed number of users $N_{th}$, if there are two states
$\theta_1=\left\{m,\boldsymbol{\omega_e^{(1)}}\right\}$ and
$\theta_2=\left\{m,\boldsymbol{\omega_e^{(2)}}\right\}$ such that
$\boldsymbol{\omega_e^{(1)}} \gtrdot \boldsymbol{\omega_e^{(2)}}$,
we have $N_c(N_{th},\theta_1)>N_c(N_{th},\theta_2)$ and
$N_d(N_{th},\theta_1)<N_d(N_{th},\theta_2)$.
\end{lemma}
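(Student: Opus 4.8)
The plan is to exploit the coupling/sample-path argument that underlies Definition \ref{def:UserState}. Since $\boldsymbol{\omega_e^{(1)}} \gtrdot \boldsymbol{\omega_e^{(2)}}$, there exists a particular channel condition (a deterministic allocation step in one time slot without admitting new users) that transforms state $\boldsymbol{\omega_e^{(1)}}$ into $\boldsymbol{\omega_e^{(2)}}$, with the same total number $N_{th}$ of users; hence each user of $\boldsymbol{\omega_e^{(2)}}$ has delay at least as large as the ``corresponding'' user of $\boldsymbol{\omega_e^{(1)}}$ under this matching, and the total accumulated delay in $\boldsymbol{\omega_e^{(1)}}$ is strictly smaller. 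I would first set up a \emph{coupling} of the two systems: run both under the largest-delay-first channel allocation policy, driven by the \emph{same} realized channel-state process $\{m_k\}$ and the \emph{same} completion coin flips (each user actively served succeeds independently with probability $P_f$). The key claim to prove inductively is that under this coupling one can maintain, in every time slot of every segment, a bijection between the surviving users of the two systems such that each system-1 user has accumulated delay no larger than its system-2 partner.

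Second, I would verify the inductive step. Given the matching at the start of a slot with channel state $m$: because the largest-delay-first rule serves the $\min(m,N_{th})$ users with the largest delays in each system, and because the system-1 delay vector is (componentwise under the matching) dominated by the system-2 one, the set of system-1 users left \emph{unserved} is matched into the set of system-2 users left unserved — this is the standard ``majorization is preserved by greedy largest-first service'' observation, and it is exactly the content used in Lemma \ref{Lemma:Pd} when passing the extra user through. Unserved users in both systems have their delays incremented by one, served users keep their delays; so the componentwise domination is preserved, except that whenever a system-1 user hits $D_{\max}$ and is forced to terminate, its system-2 partner (with delay $\ge D_{\max}$) is forced to terminate in the same slot or earlier. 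When a forced termination or normal completion occurs the segment ends; at that point I would compare, over the common segment, the quantities $n_c/(\delta+1)$ and $n_d/(\delta+1)$ appearing in Definition \ref{def:AveRev}. Domination of delays gives: system 1 reaches a user-departure event no sooner than system 2 only through \emph{normal} completions (same coins, same served sets up to the matching) and no later through \emph{forced} terminations; carefully, system-1 users are forced out only when system-2 partners already were, so on the coupled paths $n_d^{(1)} \le n_d^{(2)}$ and the smaller accumulated delay means system-1 users survive to contribute normal completions at least as often, giving $n_c^{(1)} \ge n_c^{(2)}$. Taking expectations of $n_c/(\delta+1)$ and $n_d/(\delta+1)$ yields $N_c(N_{th},\theta_1) \ge N_c(N_{th},\theta_2)$ and $N_d(N_{th},\theta_1) \le N_d(N_{th},\theta_2)$; strictness follows because $\boldsymbol{\omega_e^{(1)}} \ne \boldsymbol{\omega_e^{(2)}}$ forces a strictly positive-probability event on which at least one system-2 user is forced out strictly earlier (e.g.\ a channel realization that is $\boldsymbol{B}$ for exactly enough slots to drop the larger-delay user in system 2 but not in system 1).

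The main obstacle I anticipate is making the coupled bijection and the ``greedy largest-first preserves domination'' step fully rigorous when users leave and the two systems' populations momentarily differ — one must argue that a departure in system 2 not matched by a simultaneous departure in system 1 cannot happen, i.e.\ that a forced termination in system 1 is always accompanied by one in system 2 under the matching (true since the system-1 user at $D_{\max}$ is matched to a system-2 user at delay $\ge D_{\max}$, hence also at $D_{\max}$ and also forced out), while a \emph{normal} completion is coupled coin-for-coin among served users whose matched partners are served too. Once that invariant is nailed down, translating it into the inequalities on $N_c$ and $N_d$, and then into $\bar G(N_{th},\theta_1) > \bar G(N_{th},\theta_2)$ via \eqref{equ:G_bar}, is routine; the strict inequality just needs one explicit favorable sample path of positive probability, which exists precisely because $\boldsymbol{\omega_e^{(1)}} \gtrdot \boldsymbol{\omega_e^{(2)}}$ with $\boldsymbol{\omega_e^{(1)}} \ne \boldsymbol{\omega_e^{(2)}}$.
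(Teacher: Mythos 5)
Your proposal is correct and follows essentially the same route as the paper's own (much terser) proof: couple the two systems under identical channel realizations and the same policy, maintain a user-by-user delay-domination matching implied by $\boldsymbol{\omega_e^{(1)}} \gtrdot \boldsymbol{\omega_e^{(2)}}$, and conclude that forced terminations occur no earlier (and normal completions no less often) starting from $\theta_1$, which gives the stated inequalities on $N_c$ and $N_d$. Your added care about preserving the matching under largest-delay-first service and about exhibiting a positive-probability path for strictness goes beyond what the paper writes down, but it is the same underlying sample-path argument.
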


{\begin{proof}
The lemma directly follows the definitions of
$\boldsymbol{\omega_e^{(1)}} \gtrdot \boldsymbol{\omega_e^{(2)}}$ in
Definition \ref{def:UserState} and $N_c(N_{th},\theta)$,
$N_d(N_{th},\theta)$ in Definition \ref{def:Expectation}. If
$\boldsymbol{\omega_e^{(1)}} \gtrdot \boldsymbol{\omega_e^{(2)}}$,
the user state $\boldsymbol{\omega_e^{(1)}}$ can reach the user
state $\boldsymbol{\omega_e^{(2)}}$ under a proper channel condition
and a control policy. Consider two systems with the initial states
$\theta_1$ and $\theta_2$, respectively, and follow the same channel
conditions over time and the same control policy. When a user is
forced to leave the system (completes the connection, respectively)
with $\theta_1$, in the system with $\theta_2$, there must be a user
that is forced to leave (completes the connection or is forced to
leave, respectively) in the current or an earlier time slot.
Therefore, we get $N_c(N_{th},\theta_1)>N_c(N_{th},\theta_2)$ and
$N_d(N_{th},\theta_1)<N_d(N_{th},\theta_2)$ based on the definitions
of $N_c(N_{th},\theta)$ and $N_d(N_{th},\theta)$.
\end{proof}}


\bibliographystyle{IEEEtran}
\bibliography{TWC}

\begin{biography}[{\includegraphics[width=1in,height=1.25in,clip,keepaspectratio]{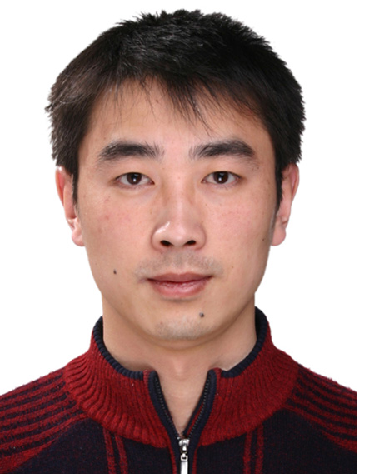}}]{Feng
Wang} received B.S. in Electronic Information Engineering from
Shandong University (Jinan, Shandong, P.R.China) in 2005 and Ph.D.
in Communication and Information System from Peking University
(Beijing, P.R.China) in 2011. He visited the Chinese University of
Hong Kong as a Research Assistant between July to December, 2009. He
is currently an engineer in the Beijing Space Technology Development
and Test Center, China Academy of Space Technology, Beijing,
P.R.China. His current research interests include resource
allocation, cognitive radio and wireless sensor networks.
\end{biography}

\begin{biography}[{\includegraphics[width=1in,height=1.25in,clip,keepaspectratio]{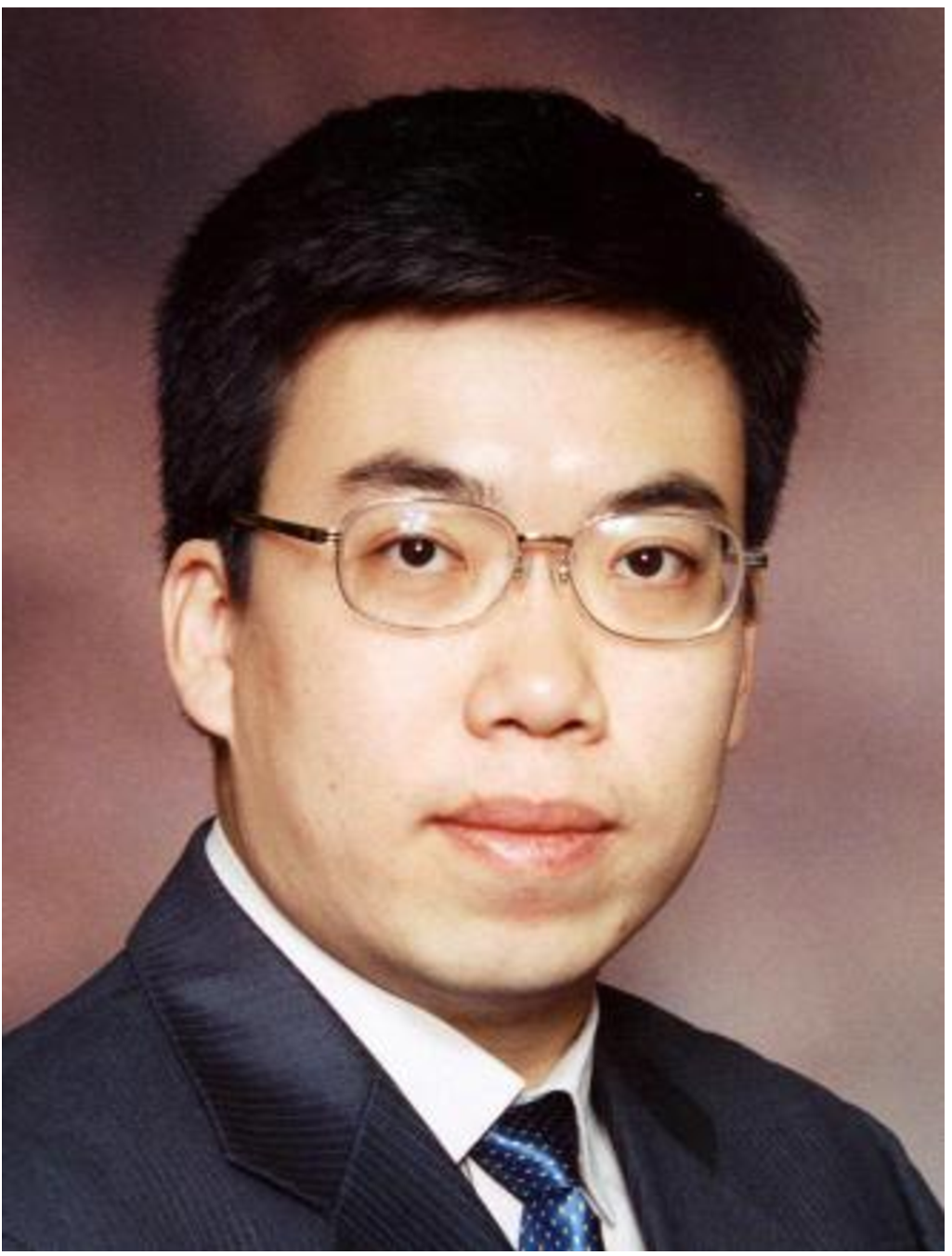}}]{Jianwei Huang}
(S'01-M'06-SM'11) is an Assistant Professor in the Department of
Information Engineering at the Chinese University of Hong Kong. He
received B.S. in Electrical Engineering from Southeast University
(Nanjing, Jiangsu, China) in 2000, M.S. and Ph.D. in Electrical and
Computer Engineering from Northwestern University (Evanston, IL,
USA) in 2003 and 2005, respectively. He worked as a Postdoc Research
Associate in the Department of Electrical Engineering at Princeton
University during 2005-2007. He was a visiting scholar in the School
of Computer and Communication Sciences at \'{E}cole Polytechnique
F\'{e}d\'{e}rale De Lausanne (EPFL) during the Summer Research
Institute in June 2009, and a visiting scholar in the Department of
Electrical Engineering and Computer Sciences at University of
California-Berkeley in August 2010.

Dr. Huang currently leads the Network Communications and Economics
Lab (ncel.ie.cuhk.edu.hk), with main research focus on nonlinear
optimization and game theoretical analysis of communication
networks, especially on network economics, cognitive radio networks,
and smart grid. He is the recipient of the IEEE Marconi Prize Paper
Award in Wireless Communications in 2011, the International
Conference on Wireless Internet Best Paper Award 2011, the IEEE
GLOBECOM Best Paper Award in 2010, the IEEE ComSoc Asia-Pacific
Outstanding Young Researcher Award in 2009, Asia-Pacific Conference
on Communications Best Paper Award in 2009, and Walter P. Murphy
Fellowship at Northwestern University in 2001.

Dr. Huang has served as Editor of IEEE Journal on Selected Areas in
Communications - Cognitive Radio Series, Editor of IEEE Transactions
on Wireless Communications, Guest Editor of IEEE Journal on Selected
Areas in Communications special issue on ``Economics of
Communication Networks and Systems'', Lead Guest Editor of IEEE
Journal of Selected Areas in Communications special issue on ``Game
Theory in Communication Systems'', Lead Guest Editor of IEEE
Communications Magazine Feature Topic on ``Communications Network
Economics'', and Guest Editor of several other journals including
(Wiley) Wireless Communications and Mobile Computing, Journal of
Advances in Multimedia, and Journal of Communications.

Dr. Huang has served as Vice Chair of IEEE MMTC (Multimedia
Communications Technical Committee) (2010-2012), Director of IEEE
MMTC E-letter (2010), the TPC Co-Chair of IEEE WiOpt (International
Symposium on Modeling and Optimization in Mobile, Ad Hoc, and
Wireless Networks) 2012, the Publicity Co-Chair of IEEE
Communications Theory Workshop 2012, the TPC Co-Chair of IEEE ICCC
Communication Theory and Security Symposium 2012, the Student
Activities Co-Chair of IEEE WiOpt 2011, the TPC Co-Chair of IEEE
GlOBECOM Wireless Communications Symposium 2010, the TPC Co-Chair of
IWCMC (the International Wireless Communications and Mobile
Computing) Mobile Computing Symposium 2010, and the TPC Co-Chair of
GameNets (the International Conference on Game Theory for Networks)
2009. He is also TPC member of leading conferences such as INFOCOM,
MobiHoc, ICC, GLBOECOM, DySPAN, WiOpt, NetEcon, and WCNC. He is a
senior member of the IEEE.
\end{biography}

\begin{biography}[{\includegraphics[width=1in,height=1.25in,clip,keepaspectratio]{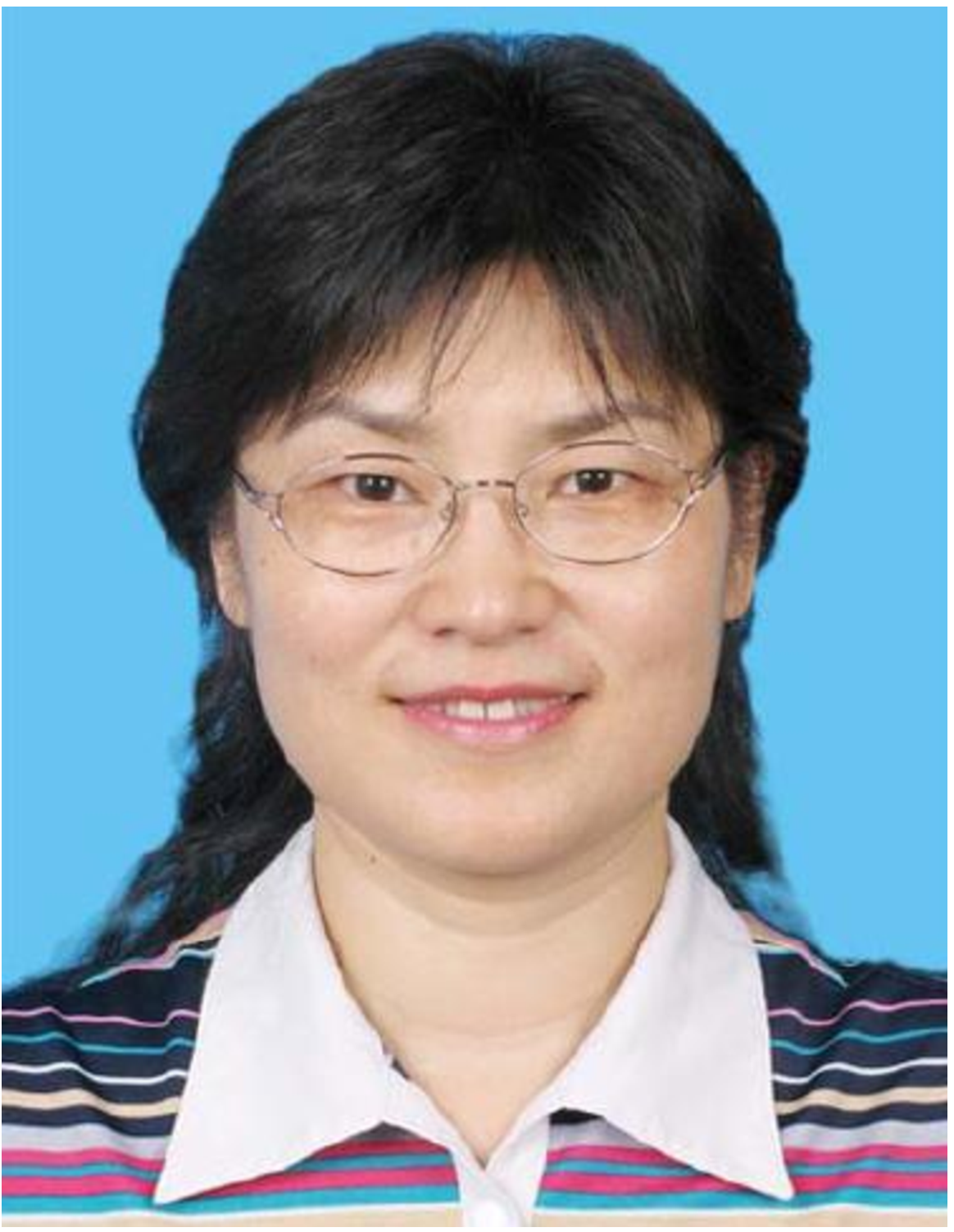}}]{Yuping Zhao}
received the B.S. and M.S. degrees in electrical engineering from
Northern Jiaotong University, Beijing, P.R.China, in 1983 and 1986,
respectively. She received the Ph.D. and Doctor of Science degrees
in wireless communications from Helsinki University of Technology,
Helsinki, Finland, in 1997 and 1999, respectively. She was a System
Engineer for telecommunication companies in China and Japan. She
worked as a research engineer at the Helsinki University of
Technology, Helsinki, Finland, and at the Nokia Research Center in
the field of radio resource management for wireless mobile
communication networks. Currently, she is a professor in the State
Key Laboratory of Advanced Optical Communication Systems \&
Networks, School of Electronics Engineering and Computer Science,
Peking University, Beijing, P.R.China. Her research interests
include the areas of wireless communications and corresponding
signal processing, especially for OFDM, UWB and MIMO systems,
cooperative networks, cognitive radio, and wireless sensor networks.
\end{biography}

\end{document}